\DeclareMathOperator{\prob}{\textsc{MinHazing}}
\DeclareMathOperator{\welfprob}{\textsc{MaxWelfareMinHazing}}
\DeclareMathOperator{\poly}{poly}
\newcommand{\np}{\textsc{NP}}
\newcommand{\Z}{\mathbb{Z}}
\newcommand{\N}{\mathbb{N}}
\newcommand{\mem}{\texttt{memo}}
\newcommand{\none}{\texttt{none}}
\newcommand{\bd}{B}
\newtheorem{theorem}{Theorem}
\newtheorem{example}{Example}
\newtheorem{proposition}[theorem]{Proposition}
\newtheorem{corollary}[theorem]{Corollary}
\theoremstyle{definition}
\newtheorem{definition}[theorem]{Definition}
\theoremstyle{definition}
\newtheorem{remark}[theorem]{Remark}
\title{Beyond Symmetry in Repeated Games with Restarts}
\author[1]{Henry Fleischmann}
\author[1]{Kiriaki Fragkia}
\author[1,2]{Ratip Emin Berker}
\affil[1]{Carnegie Mellon University}
\affil[2]{Foundations of Cooperative AI Lab (FOCAL)}
\affil[ ]{\texttt{\{hfleisch, kiriakif, rberker\}@cs.cmu.edu}}
\begin{document}

\maketitle

\begin{abstract}
Infinitely repeated games support equilibrium concepts beyond those present in one-shot games (\emph{e.g.}, cooperation in the prisoner's dilemma). Nonetheless, repeated games fail to capture our real-world intuition for settings with many anonymous agents interacting in pairs. Repeated games with restarts, introduced by Berker and Conitzer [IJCAI '24], address this concern by giving players the option to restart the game with someone new whenever their partner deviates from an agreed-upon sequence of actions. In their work, they studied symmetric games with symmetric strategies.  We significantly extend these results, introducing and analyzing more general notions of equilibria in asymmetric games with restarts. We characterize which goal strategies players can be incentivized to play in equilibrium, and we consider the computational problem of finding such sequences of actions  with minimal cost for the agents. We show that this problem is $\np$-hard in general. However, when the goal sequence maximizes social welfare, we give a pseudo-polynomial time algorithm.
\end{abstract}

\section{Introduction}

\par Social dilemmas often arise when individuals aim to satisfy their own incentives, which often may prohibit cooperation. In fact, in many games, although cooperating could yield better payoffs for both players, it does not yield a Nash equilibrium and is hence unlikely to occur. Repeated games can circumvent this concern by capturing more complex and realistic notions of equilibria, where mutual cooperation can be incentivized. For example, consider the game in Table \ref{symmetricgametable}.

{\begin{table}[b]
\centering

\begin{tabular}{| c | c |c| c | }
\hline
 {} & $C_1$ & $C_2$ & $D$   \\
 \hline
 $C_1$ & $8, 8$ & $0, 8$ & $0, 17$\\  
 $C_2$ & $8, 0$ & $2, 2$ & $0, 11$\\
 $D$  & $17, 0$ & $11, 0$ & $1, 1$ \\
 \hline
\end{tabular}

\caption{Symmetric repeated game}
\label{symmetricgametable}
\end{table}}

Notice that in the single-shot version of the game, actions $C_1$ and $C_2$ (``cooperate'') are strictly dominated by action $D$ (``defect''). This leads to $(D, D)$ being a dominant strategy Nash equilibrium, even though each player could obtain more value by cooperating. However, when playing this game repeatedly, cooperation \emph{ad infinitum} can be an equilibrium given that players are sufficiently patient: both players can agree to cooperate by playing $C_1$ until their opponent defects, at which point they start playing action $D$ forever~\cite{grimtrig}. In this case, no player is incentivized to deviate from $C_1$, since any additional payoff they could receive from deviating would be offset by the subsequent punishment.

However, this type of collaboration fails in many real-world anonymous settings, in which players can choose to leave the game and restart with someone new. Consider an infinite collection of agents playing a repeated game in pairs either forever or until one of the players chooses to leave, in which case they are assigned a new partner. If there is no way for a player to check their new partner's history, a malicious agent could hop from partner to partner, defecting and then immediately leaving before suffering any punishment. In real-life relationships, such as ones between colleagues, freelancers with clients, or among romantic partners, agents tend to more gradually build up trust to avoid repeated exploitation. How can we formalize this game-theoretically?

One way is to consider repeated games \emph{with restarts} \cite{berker2024computing}, in which pairs of anonymous agents play an infinitely repeated game with the option to restart the game with a new player at any point. Consider then the strategy of everyone agreeing on a common sequence of actions to take, and if either player in a pair ever deviates from the sequence, the sequence is restarted. This simulates the agent punishing a defector by leaving the relationship and seeking a new partner. Ideally, such a sequence would incentivize agents to follow it at the risk of initiating a relationship with a new partner, which might come at a high cost.

Concretely, consider again the game in Table \ref{symmetricgametable} and let $(D, D), (C_2, C_2), (C_1, C_1),(C_1, C_1), \dots$ be a sequence of action pairs that both players commit to playing. Notice that the first action pair $(D, D)$ is a dominant strategy Nash equilibrium, so no player has an incentive to deviate. In the second round, an agent can guarantee an additional payoff of $+9$ by deviating to action $D$. However, this results in their partner ending the relationship, at which point the deviating player will have to restart the sequence. This results in a $(1+11)/2 = 6$ per-round average payoff, whereas that player could have eventually guaranteed an average payoff of 8 by choosing to follow the sequence as is. Deviating to $D$ on round three yields an additional payoff of $9$, but this only amounts to a $(1 + 2 + 17)/3 \approx 6.66$ average, compared to the $8$ they could have gained following the existing sequence. A similar reasoning applies for future rounds, ensuring stability.

\par In previous work, \citet{berker2024computing} formalize this subclass of Nash equilibria in repeated games with restarts and analyze its computational complexity, in the restricted setting of symmetric games and symmetric strategies. However, even in a simple example such as the one above, having players alternate between two actions (an asymmetric strategy) can yield a higher per-round average payoff. For example, say from the third round onward, players follow the strategy $(C_1, D), (D, C_1), \ldots$ (alternating between actions $C_1$ and $D$). Then, each will receive a per-round average payoff of $(0+17)/2 = 8.5$, compared to the $8$ that the best symmetric strategy $(C_1, C_1)$ could yield. This sequence is also stable: when each player plays action $D$, they receive a payoff of $17$ and have no incentive to deviate. When playing $C_1$, they could deviate to $D$ for a $+1$ additional payoff, but this is once again offset by the cost of restarting the sequence. This shows that, \emph{even in symmetric games}, equilibria with asymmetric strategies improve outcomes for both players. Therefore, in our work we aim to answer the following:

\begin{center}
\emph{How can we optimize payoff of a (possibly asymmetric) equilibrium sequence in (possibly asymmetric) repeated games with restarts? 
}
\end{center}

\subsection{Related Work}

\par Infinitely repeated games \emph{without} restarts are well studied in the literature. For a thorough treatment, see \citet{mailath2006repeated} and \citet{mertens2015repeated}. In particular, there are numerous characterizations of equilibria, referred to as Folk Theorems (see, for example, \citet{grimtrig} and \citet{fudenberg86}). One interpretation of a Folk theorem is that, for each action pair $(a^{(1)}, a^{(2)})$ where players receive strictly more utility than their minmax payoff, there is a strategy and a sufficiently large discount factor such that $(a^{(1)}, a^{(2)})$ is repeated forever in equilibrium. (Recall that in repeated games it is typical to introduce a discount factor $\beta \in (0,1)$ such that the round $i$ utility is scaled by $\beta^i$.) Here, the minmax payoff refers to the maximum payoff a player gets if their opponent plays the action minimizing the first player's maximum payoff. The key idea is that either player can punish their opponent for deviating by playing the action minimizing the opponent's (maximum) utility. 
 
The Folk theorem result most relevant to our work is that of \citet{fudenberg86}. In their setting, the mere threat of punishment motivates players to adhere to Nash equilibria, since leaving your partner is not allowed. In comparison, as we will see, agents in our setting must be \emph{hazed} upfront to prevent serial defectors. This distinction arises as a result of our model capturing anonymity among players, a feature common to many interactions in the real world. Another difference between our works is that our focus is not whether equilibria with a given stable sequence exist (the direct analogue of typical Folk theorem guarantees), but, given that they do, we aim to find the ``best'' such equilibrium among them. We view this as finding the ``least severe'' punishment for deviation that still ensures an equilibrium.

The negative impacts of anonymity on establishing cooperative outcomes are well-documented and have garnered significant scientific interest. For example, see \citet{adar2000free} and \citet{hughes2005free} for discussions of the rise of free-riding agents ultimately resulting in the decline of the peer-to-peer file sharing network Gnutella. This can be viewed as an instance of the Tragedy of the Commons~\cite{hardin1968tragedy}. Several research strands have consequently analyzed game-theoretic approaches to encourage cooperation in anonymous settings~\cite{johnny2010building,yang2012game}. We take a different perspective, focusing not necessarily on how to incentivize cooperation, but, rather, how to understand and compute the most cooperative stable outcome under the restrictions imposed by the game at hand.

Cooperation among near-anonymous agents interacting in pairs has also been studied in repeated games in the context of partner selection rules~\cite{zhang2016opting,rand2011dynamic,wang2012cooperation}. In prior work~\cite{anastassacos2020partner,leung2024learning,leung2024promote}, they find the emergent dominance of ``equivalent retaliation rules'' akin to Tit-for-Tat. The latter two works demonstrate that learning agents both learn and (as a majority) adopt the Out-for-Tat rule, in which players leave partners who deviate against them. This provides strong empirical support for our model, which assumes this behavior.

 Our starting point is a framework introduced by ~\citet{berker2024computing}, who study repeated symmetric games with restarts in which all agents follow an identical sequence of moves. They prove several fundamental results on equilibrium sequences in this restricted setting. 
\begin{theorem} \label{thm: emin basic results}
(Informal version of Proposition 1, Lemma 1, and Lemma 2 of \citet{berker2024computing}) In repeated symmetric games $\Gamma$ with restarts and discount factor $\beta$, where all agents follow an identical sequence of actions (\emph{i.e.}, the strategy is symmetric), we have each of the following:
\begin{enumerate}
    \item If there is some equilibrium sequence for $\Gamma$, then there exists an equilibrium sequence maximizing the agents' payoffs, which we call an \emph{optimal sequence.}
    \item Any optimal sequence will eventually reach a step in which both agents achieve a single payoff for the rest of the sequence. Call that payoff the \emph{goal value}. 
    \item For large enough $\beta$, the \emph{goal value} of any optimal sequence will be the highest  payoff of an action in $\Gamma$.
\end{enumerate}
\end{theorem}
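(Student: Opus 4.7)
\emph{Plan.} For part (1), I would use a topological compactness argument. The space of infinite sequences of action pairs is compact in the product topology by Tychonoff's theorem, since each step has finitely many choices. Since payoffs are bounded and $\beta \in (0,1)$, the discounted payoff $V_0(\sigma) = \sum_{i \geq 0} \beta^i u(a_i)$ converges uniformly and hence is continuous in $\sigma$. The one-shot deviation constraints $V_i(\sigma) \geq u(d_i) + \beta V_0(\sigma)$, where $d_i$ is the best one-shot deviation at step $i$, are weak inequalities between continuous functions, cutting out a closed subset. So the equilibrium set is compact, and the continuous payoff functional attains its maximum.

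For part (2), I would use a tail-replacement argument. Suppose for contradiction that an optimal sequence $\sigma^* = (a_0, a_1, \ldots)$ does not eventually have constant payoff. Since payoffs take finitely many values, the maximum value $v^*$ attained infinitely often in the tail is realized by some action $a^*$, while some strictly smaller value also occurs infinitely often. Pick $N$ large enough that $u(a_i) \leq v^*$ for all $i \geq N$ and $a_N = a^*$. Let $T = v^*/(1-\beta)$ and form $\sigma'$ by replacing the tail of $\sigma^*$ from step $N$ onwards with the constant $(a^*, a^*)$. Setting $\Delta = T - V_N(\sigma^*) > 0$, we have $V_0(\sigma') = V_0(\sigma^*) + \beta^N \Delta$, a strict improvement. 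For equilibrium of $\sigma'$: at each prefix step $i < N$, the best deviation is unchanged while $V_i$ rises by $\beta^{N-i}\Delta$ and $\beta V_0$ only by $\beta^{N+1}\Delta$, and since $\beta^{N-i} \geq \beta^{N+1}$ the old prefix constraint implies the new one. At tail steps $i \geq N$, the sole new constraint $T \geq u(d^*) + \beta V_0(\sigma')$ (where $d^*$ is the best deviation from $a^*$) rearranges to $(V_N(\sigma^*) - u(d^*) - \beta V_0(\sigma^*)) + (1 - \beta^{N+1})\Delta \geq 0$, which follows since both summands are nonnegative (the first by the old constraint at step $N$, the second by $\beta \in (0,1)$ and $\Delta > 0$). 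This contradicts optimality.

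For part (3), let $u^* = \max_a u(a,a)$ with maximizer $a^*$. The goal value of any sequence is a diagonal payoff, hence at most $u^*$; it suffices to exhibit, for $\beta$ close to $1$, an equilibrium sequence with tail $(a^*, a^*)$ that dominates any equilibrium with smaller tail value. I would use a hazing prefix of low-payoff symmetric action pairs of length $N(\beta)$ chosen so that the restart payoff $V_0$ stays bounded while $u^*/(1-\beta)$ diverges as $\beta \to 1$. Then the tail constraint $u^*/(1-\beta) \geq u(d^*) + \beta V_0$ is satisfied with room to spare, and the prefix constraints hold by construction of the hazing. Any other equilibrium with tail value $v < u^*$ loses at least $(u^* - v)/(1-\beta)$ in tail payoff, which exceeds the bounded prefix contribution for $\beta$ large enough; hence every optimal sequence has goal value $u^*$.

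\emph{Main obstacle.} Part (2) is the most delicate step. The crux is the simultaneous handling of prefix and tail constraints under tail replacement: the tail improvement $\Delta$ lifts each $V_i$ by $\beta^{N-i}\Delta$ but only lifts $\beta V_0$ by $\beta^{N+1}\Delta$, and the gap $\beta^{N-i} - \beta^{N+1}$ absorbs the slack needed for the prefix constraints; for the new tail constraint, the residual $(1-\beta^{N+1})\Delta$ absorbs the increase in the restart term by piggybacking on the old constraint at step $N$. Aligning $N$ with $a_N = a^*$ is a subtle but essential matchup so the best deviation action is consistent across both sequences, enabling this algebra.
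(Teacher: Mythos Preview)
First, note that the present paper does not prove this theorem: it appears in the related-work section as an informal summary of results from \citet{berker2024computing} and is cited without proof. So there is no in-paper argument to compare against, and the assessment below concerns your sketch on its own merits.

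Your arguments for parts (1) and (2) are sound. The compactness argument for (1) is clean: the product space is compact, the discounted payoff is continuous by uniform convergence, and the countable family of weak-inequality constraints cuts out a closed (hence compact) set on which the payoff attains its maximum. The tail-replacement argument for (2) is carefully executed. The alignment $a_N = a^*$ so that the best-deviation action matches across the old and new sequences is exactly the right move, and your bookkeeping on how the slack $\Delta$ propagates through the prefix constraints (via $\beta^{N-i} \geq \beta^{N+1}$) and through the single new tail constraint (via the residual $(1-\beta^{N+1})\Delta \geq 0$) is correct.

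Part (3), however, has a genuine gap. Your claim that the hazing prefix can be chosen so that ``the restart payoff $V_0$ stays bounded while $u^*/(1-\beta)$ diverges'' is false: any sequence whose tail repeats $a^*$ has $V_0 \geq \beta^{N} u^*/(1-\beta)$, which diverges as $\beta \to 1$ for any prefix length $N = N(\beta)$ growing slower than $c/(1-\beta)$; and if $N$ grows that fast, the prefix contribution of competing equilibria is no longer bounded either, so your comparison argument collapses. The correct normalization is $(1-\beta)V_0$, not $V_0$. More fundamentally, you never establish where the ``low-payoff symmetric action pairs'' for hazing come from, nor that the prefix constraints ``hold by construction.'' The missing idea is to tie the construction back to the hypothesis that \emph{some} equilibrium $\sigma$ exists: stability at step $0$ forces $(1-\beta)V_0(\sigma) \geq d_{\sigma_0}$, and since $(1-\beta)V_0(\sigma) \to v_\gamma \leq u^*$ as $\beta \to 1$, the first action $\sigma_0$ satisfies $d_{\sigma_0} < u^*$ (strictly, under the limit stability characterization). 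This $\sigma_0$ then furnishes a legitimate hazing action for the $a^*$-tail via a folk-theorem construction (cf.\ Theorem~\ref{thm: folk theorem equil} of the present paper). Without this link, your construction in (3) is unsupported.
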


Theorem \ref{thm: emin basic results} has a number of implications. It is natural to seek equilibria where the agents have the best cumulative outcomes. The first property says that such equilibria in fact exist (at least, for example, in any game with a pure Nash equilibrium), making it reasonable to study such equilibria. The second and third properties characterize the general structure of optimal equilbrium sequences for sufficiently large discount factor $\beta$. They begin with a \emph{hazing period}, in which the agents sacrifice utility to build mutual trust, followed by the agents reaping the reward of their camaraderie by receiving the goal value utility in each round thereafter.

Among such equilibrium sequences, some sequences require less hazing than others. Consider the game shown in Table \ref{symmetricgametable}. It is easy to check that both $(D,D),$ $(C_2, C_2),$ $(C_1, C_1)$, $(C_1, C_1), \ldots$ and  $(D,D),$ $(C_2, C_2),$ $(C_2, C_2),$ $\ldots,$ $(C_2, C_2),$ $(C_1, C_1)$, $(C_1, C_1)$, $\ldots$ are equilibrium sequences for sufficiently large $\beta$. However, the latter sequence delays the socially optimal action $C_1$ unnecessarily.

Therefore, \citet{berker2024computing} define an equivalence relation among sequences, yielding a more granular view of optimality and capturing the notion of optimality of a sequence also with respect to the amount of required hazing (see Section 4 of~\citet{berker2024computing}, limit-utility equivalence classes). This in turn motivates a natural computational problem: given a symmetric game, compute an optimal symmetric strategy sequence with minimal hazing. They show that this problem is (weakly) $\np$-hard, while also giving a pseudo-polynomial time algorithm.

In showing these results, the authors utilize a number of properties of optimal sequences in this restricted setting. For instance, they exploit the so-called ``threshold monotonicity'' (see Lemma 3 of \cite{berker2024computing}) property, which intutively states we can restrict our attention to sequences that order actions in terms of how much hazing they need before they can be played. This does not hold when extending to asymmetric strategies, as playing an action pair might require different amounts of hazing for each player.

In this work, we significantly relax the structural assumptions of~\cite{berker2024computing}, considering repeated games in which the players need not play the same action in each round and in which the game itself may be asymmetric. In the case of symmetric games and asymmetric strategies, this raises the question of which strategy each player will be assigned to when rematching. We primarily consider the case in which players follows the same strategy every time they rematch with a new partner. In Section~\ref{sec:randomreaasignment} we permit players to switch roles when rematching. Our main complexity and algorithmic results easily extend to this setting.

\subsection{Motivating Examples}
We begin by giving several examples to motivate our results, differentiate them from those in \cite{berker2024computing}, and highlight the complex behaviors that arise in this setting.

\paragraph{Symmetric games with asymmetric strategies.} Consider the game of two agents working on a series of group projects. Two distinct tasks must be done to complete each project, $T_1$ and $T_2$, and the agents only get utility $1$ if the project is complete. This symmetric game is represented in Table~\ref{tab: task game}. Note that, even though this is a symmetric game, no sequence of pairs of actions with both players always playing the same action will be stable: both players are incentivized to deviate when playing $(T_2,T_2)$ or playing $(T_1,T_1)$. Hence, no stable sequence exists in this game under the model considered in \cite{berker2024computing}. However, $(T_1, T_2)$ is a pure strategy Nash equilibrium, and, hence, repeating this action pair forever is a stable sequence in our model.

\begin{table}
\centering
\begin{minipage}{0.3\textwidth}
\centering
\begin{tabular}{| c | c |c| }
\hline
 {} & $T_1$ & $T_2$    \\
 \hline
 $T_1$ & $0, 0$ & $1, 1$ \\ 
 \hline
 $T_2$ & $1, 1$ & $0, 0$ \\
    \hline
 \end{tabular}
 
 \caption{Group Project}
\label{tab: task game}
\end{minipage}
\hspace{0.1\textwidth}
\begin{minipage}{0.45\textwidth}
\centering

\begin{tabular}{| c |c | c | c | c | }
\hline
 {} & $C$ & $D$ & $H_1$ & $H_2$ \\
 \hline
 $C$ & $99, 99$ & $0, 100$ & $0,0$ & $0,0$ \\ 
 \hline
 $D$ & $100,0$ & $0, 0$ & $0,0$ & $0,0$ \\
    \hline
 $H_1$ & $0,0$ & $0,0$  & $0,0$ & $5,50$ \\
 \hline 
 $H_2$ & $0,0$ & $0,0$ & $50,5$ & $0,0$ \\
 \hline
 \end{tabular}
 \caption{Nose Goes}
\label{tab: nose goes}
\end{minipage}
\end{table}

\paragraph{Some (even symmetric) games are unfair.}
Can we always distribute the hazing cost or utility fairly between agents? It turns out that sometimes agents must be hazed unequally to achieve minimum total hazing. 

Consider the game in Table~\ref{tab: nose goes}. The maximum social welfare outcome consists of $(C, C)$ repeating \emph{ad infinitum}, and, for large enough discount factor, it is possible to haze enough to disincentivize deviation from $(C, C)$ in only a single round. However, the minimum hazing sequence must include only one of either $(H_1, H_2)$ or $(H_2, H_1)$, leading to uneven total hazing. Inherently unfair games are perhaps less surprising in the asymmetric setting, but which games this holds for is not immediately obvious. We explore this question in Section~\ref{sec: formalism}. 

Due to the complexity of characterizing the ``fairness'' of stable sequences, we focus on formulating and solving corresponding optimization problems, the focus of Section~\ref{sec: algorithmic results}.

\subsection{Organization of the Paper}
In Section~\ref{sec: formalism}, we formalize the notions of (possibly asymmetric) equilibrium sequences in repeated games with restarts. In Section~\ref{sec:existence}, we characterize the conditions under which finite sequences of action pairs can form the ``goal sequences'' of stable sequences.  In Section~\ref{sec: algorithmic results} we consider this problem in the limit as the discount factor becomes negligible. In this regime, we define two optimization problems related to finding minimum hazing stable sequences and show that both are $\np$-hard. We also show that when the goal sequence is composed of maximum social welfare action pairs, there is a pseudo-polynomial time algorithm for solving the problem. Section~\ref{sec:randomreaasignment} addresses an alternative model where agents can change roles after restarting the game. In Section~\ref{sec: future}, we discuss several directions for future work. Appendix~\ref{appendix: stable} includes a discussion of when games have stable sequences, in a few special cases. Finally, all omitted proofs can be found in Appendix~\ref{appendix:omit}.

\section{Preliminaries} \label{sec: prelim}
Say $\Gamma$ is a two-player normal-form game, with a set of action pairs $A = A^{(1)} \times A^{(2)}$, where $A^{(i)} = \{a^{(i)}_1, a^{(i)}_2, \ldots , a^{(i)}_{n^{(i)}}\}$ is the set of actions available to player $i$.
\begin{itemize}
    \item We let $p^{(i)}: A \rightarrow \Z$ be the \emph{payoff} function of player $i$, taking as input a pair of actions of the two players and outputting an integer value. As shorthand, we also let $p^{(1)+(2)}$ denote $p^{(1)} + p^{(2)}$.
    \item Let $\beta \in (0,1)$ be the \emph{discount factor} such that if player $i$ receives payoff $p^{(i)}_t$ in timestep $t$, then her total discounted utility will be $\sum_{t = 0}^{\infty} \beta^t p^{(i)}_t$.
\end{itemize}

We will denote a game as a tuple, $\Gamma = (p^{(1)}, p^{(2)}, A)$. When $i$ refers to a player, we use $-i$ to refer to their opponent. 
The sequences of action pairs are $0$-indexed for consistency with the powers of the discount factor $\beta$. With $\N$ we denote the non-negative integers.

\section{Equilibrium Sequences} \label{sec: formalism}
We focus on strategies corresponding to sequences $\sigma = (\sigma_t^{(1)}, \sigma_t^{(2)})_{t \in \N}$, where $\sigma \in A^{\N}$ is a sequence of action pairs in $\Gamma$ such that player $i$ commits to playing $(\sigma_t^{(i)})_{t \in \N}$. Player $1$ will restart the sequence if player 2 deviates from $\sigma^{(2)}$ and vice versa. We also allow either player to restart the game after any round (even without deviating). This is a subtlety that does not arise in \cite{berker2024computing}. We illustrate the possibility of agents wanting to restart the game without deviating through the following example.
\begin{example}[Restarting without Deviating] \label{ex: re without de}
Consider the game in Table~\ref{tab: re without de}.
\begin{table}
\centering
\begin{minipage}{0.45\textwidth}
\centering

\begin{tabular}{| c | c |c| }
\hline
{} & $R$ & $C$ \\
 \hline 
 $r$ & $1, 0$ & $-100, -100$ \\ 
 \hline
 $c$ & $-100, -100$ & $0, 1$ \\
    \hline
 \end{tabular}
 
 \caption{Tightrope walking}
\label{tab: re without de}
\end{minipage}
\hspace{0.05\textwidth}
\begin{minipage}{0.45\textwidth}

\centering
\begin{tabular}{| c | c |c |}
\hline
 {} & $P$ & $S$  \\
 \hline
 $P$ & $1,0$ & $0, 1$ \\ 
 \hline
 $S$ & $0, 1$ & $0,1$ \\
    \hline
 \end{tabular}
 
 \caption{Doomed to suffer}
\label{tab: doom}
\end{minipage}
\end{table}

Consider any sequence (tightrope) $\sigma \in \{(r,R), (c,C)\}^{\N}$. Neither player can ever deviate or they receive $-100$ utility (fall off the tightrope). Hence, if players could only restart upon deviations of their opponents, any such $\sigma$ would be stable. However, if the players are permitted to restart after any round, one of the players can always ensure they receive utility $1$. Namely, if $\sigma_0 = (r,R)$, the row player can restart the game after the first round and similarly for the column player if $\sigma_0 = (c,C)$. We view the options for players to restart after any round as akin to typical assumptions of individual rationality.

\end{example}

Below we formally define the notion of a stable sequence of action pairs (\emph{i.e.}, a Nash equilibrium), in which no player can gain more utility by deviating or restarting.
\begin{definition}[Stable Sequences] \label{def: stable, fixed role}
We call a sequence $\sigma = (\sigma_t^{(1)}, \sigma_t^{(2)})_{t \in \N} \in A^\N$ \emph{stable} for discount factor $\beta$ if no player can increase their discounted utility by deviating or restarting the game at any timestep. Concretely, for player 1 we have, for all $k \in \N$ and  $a^{(1)} \in A^{(1)}$:
\begin{align*} 
    & \textstyle\sum_{t=0}^{k-1} \beta^t p^{(1)}(\sigma_t^{(1)}, \sigma_t^{(2)}) + \beta^k p^{(1)}(a^{(1)}, \sigma_k^{(2)}) 
    \\ &+ \textstyle \sum_{t=0}^{\infty} \beta^{k + 1 + t} p^{(1)}(\sigma_t^{(1)}, \sigma_t^{(2)}) \leq \sum_{t=0}^{\infty}\beta^t p^{(1)}(\sigma_t^{(1)}, \sigma_t^{(2)}), 
\end{align*}
The analogous inequalities must also hold for player 2.
\end{definition}
\begin{remark} \label{rmk: memoryless}
A careful reader might notice that Definition~\ref{def: stable, fixed role} only seems to consider players deviating a single time. This is because, if deviating only once cannot increase a player's utility, deviating more than once cannot either (this is an application of the ``one-shot deviation principle''). An analogous observation was made in \cite{berker2024computing}. 
\end{remark}

\begin{restatable}{proposition}{devonlyonce} \label{prop: deviate only once}
It benefits a player to deviate at least once if and only if it benefits a player to deviate once.  
\end{restatable}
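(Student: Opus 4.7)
The plan is to establish this via the one-shot deviation principle, suitably adapted to the restart setting. One direction is immediate: a single deviation is a special case of deviating at least once, so if some single deviation strictly increases the player's utility, then this witnesses a beneficial ``at least one'' deviation strategy.

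For the nontrivial direction I would prove the contrapositive: if no single deviation strictly increases utility, then no multi-deviation strategy does either. Let $U := \sum_{t=0}^{\infty} \beta^t p^{(1)}(\sigma_t^{(1)}, \sigma_t^{(2)})$ denote the follow-$\sigma$ utility, and fix a candidate multi-deviation strategy $\tau$ whose deviation rounds are $k_1 < k_2 < \cdots$. The crucial structural fact of the restart model is that after any deviation both players reinitialize the sequence $\sigma$ at round $0$; hence the continuation value from round $k_1+1$ onward factors as $\beta^{k_1+1}\, V(\tau')$, where $\tau'$ is a strategy in the freshly-restarted game with strictly fewer deviations than $\tau$.

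I would then induct on the number of deviations of $\tau$. The base case of zero deviations gives $V(\tau) = U$. For the inductive step, letting $a_1$ be the action used in the first deviation,
\[
V(\tau) = \textstyle \sum_{t=0}^{k_1-1} \beta^t p^{(1)}(\sigma_t^{(1)}, \sigma_t^{(2)}) + \beta^{k_1} p^{(1)}(a_1, \sigma_{k_1}^{(2)}) + \beta^{k_1+1} V(\tau').
\]
By the inductive hypothesis $V(\tau') \leq U$, the right-hand side is bounded above by the utility of a single deviation at round $k_1$ with action $a_1$, which by assumption is at most $U$. Hence $V(\tau) \leq U$ for every finite-deviation strategy $\tau$.

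The main obstacle I anticipate is handling strategies that deviate infinitely often, since the induction directly covers only the finite case. I would resolve this by a truncation argument: let $\tau_T$ be the strategy that mimics $\tau$ through round $T$ (including any deviations in that window) and follows $\sigma$ forever after. Each $\tau_T$ has finitely many deviations, so $V(\tau_T) \leq U$ by the finite case. Since payoffs are bounded integers and $\beta \in (0,1)$, the geometric tail beyond round $T$ vanishes, giving $V(\tau_T) \to V(\tau)$, and therefore $V(\tau) \leq U$, completing the proof.
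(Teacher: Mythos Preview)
Your proof is correct and follows essentially the same approach as the paper: both argue the contrapositive by induction on the number of deviations, using the key observation that the post-deviation continuation is a $\beta^{k_1+1}$-scaled copy of the original game. Your treatment is more explicit and additionally handles infinitely many deviations via a truncation/limit argument, a case the paper's induction does not address.
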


The proof follows by a simple inductive argument, using that, post-deviation, the remaining game becomes a scaled version of the initial game. 

Notice that there can be infinitely many stable sequences for a given game (\emph{e.g.}, stable sequences of the form $(D,D),$ $(C_2, C_2),$ $(C_2, C_2),$  $\ldots,$ $(C_1,C_1),$ $(C_1, C_1),$  $\ldots$ in the game in Table~\ref{symmetricgametable}). Therefore, we would like to be able to (1) distinguish these sequences and (2) compute the most desirable among them. To do so, we formalize the notions of \emph{Pareto-optimality}, \emph{welfare maximization}, and \emph{limit-utility fairness}. In Section \ref{sec:existence}, we describe stable sequences in symmetric games with asymmetric strategies that satisfy all three properties.

\begin{remark}
In our study of stable sequences, we restrict our attention to sequences of the following form: a finite length prefix followed by an infinite periodic sequence of action pairs. We call the initial prefix the \emph{hazing period} and the finite sequence repeated infinitely thereafter the \emph{goal sequence}. A finite description length is a prerequisite for efficient computation, and general sequences need not necessarily admit one ---there is an uncountably infinite number of sequences but only a countably infinite number of finite descriptions. Moreover, periodicity allows us to take limits of the sums of differences of payoffs in sequences without concern for sequence convergence issues. This permits a natural way to compare the ``quality'' of sequences and formally define the related optimization problems of finding ``optimal'' stable sequences.
\end{remark}

\begin{definition}[Pareto-Optimal Sequence]
    Given a game $\Gamma = (p^{(1)}, p^{(2)}, A)$ and $\sigma, \tilde{\sigma} \in A^{\N}$, we say that  $\sigma$ \emph{surpasses} $\tilde{\sigma}$ if there exists $i \in [2]$ such that:
    \begin{align*}
        &\lim_{\beta \to 1} \textstyle\sum_{t=0}^{\infty} \beta^{t} \left( p^{(i)}(\sigma_t) - p^{(i)}(\tilde{\sigma}_t) \right) > 0, \; \text{while} 
         &\lim_{\beta \to 1} \textstyle\sum_{t=0}^{\infty} \beta^{t} \left( p^{(-i)}(\sigma_t) - p^{(-i)}(\tilde{\sigma}_t) \right) \geq 0.
    \end{align*}
    A sequence $\sigma \in A^{\N}$ is \emph{Pareto-optimal} (in $\beta \to 1$) if (1) it is stable for all sufficiently large $\beta$ and (2) it is not surpassed by any other stable sequence. 
\end{definition}

A notion stronger than Pareto optimality is welfare maximization, which we define for our context below.
\begin{definition}[Welfare maximization]
    Given a game $\Gamma = (p^{(1)}, p^{(2)}, A)$, a stable sequence $\sigma \in A^{\N}$  is \emph{welfare maximizing} (in the $\beta \to 1$ limit) if, for any other stable sequence
    $\tilde{\sigma} \in A^{\N}$, it holds that:
    \begin{align*}
        \lim_{\beta \to 1} \textstyle \sum_{t=0}^{\infty}\beta^{t}(p^{(1)+(2)}(\sigma_t) - p^{(1)+(2)}(\tilde{\sigma}_t)) \geq 0.
    \end{align*}
\end{definition}

We give a final desirable property of stable sequences.
\begin{definition}[Limit-utility fairness] \label{defn: lim utility fair}
    Given a game $\Gamma = (p^{(1)}, p^{(2)}, A)$, a stable sequence $\sigma \in A^{\N}$ is \emph{limit-utility fair} if there exists $T \in \N$ such that $\lim_{\beta \rightarrow 1} \sum_{t=T}^{\infty}\beta^{t} \left(p^{(1)}(\sigma_t) - p^{(2)}(\sigma_t) \right) = 0$.
\end{definition}
\begin{remark} \label{rmk: doom}
 Limit-utility fairness is not always possible in asymmetric games. For example, it is not possible in games where one player always receives strictly more utility than the other. Even when the players' utilities are normalized to be between 0 and 1, say by shifting their minimum utilities to each be $0$ and then scaling down, there are inherently ``unfair'' examples, such as the one shown in Table \ref{tab: doom}. In this game, if either player plays $S$, the row player ``suffers,'' receiving utility $0$. Indeed, the row player only receives utility $1$ if both players play $P$ (``seek and receive pity''). But, the sequence repeating $(S, S)$ forever is stable, and no stable sequence can ever include $(P,P)$ since the column player will always be incentivized to deviate to $(P, S)$.
\end{remark}

\section{Existence Results} \label{sec:existence}
Our starting point is the following:
\begin{center}
\emph{For which pairs $(\Gamma, \gamma)$, such that $\Gamma$ is a game and $\gamma \in A^r$,\\ can $\gamma$ be the goal sequence of a stable sequence of $\Gamma$?}   
\end{center}

Not all (even symmetric) games admit stable sequences, \emph{e.g.},  Rock-Paper-Scissors. Moreover, although all $2 \times 2$ symmetric games have a stable sequence, this does not hold for asymmetric games. We discuss these nuances in Appendix~\ref{appendix: stable}.

Theorem~\ref{thm: folk theorem equil} characterizes which goal sequences can arise in stable sequences. For convenience, we define the goal value of a goal sequence $\gamma$ as the average per-round payoffs obtained in the goal sequence when $\beta \to 1$. We also introduce notation for the deviation payoffs for an action pair.

\begin{definition}[Goal value]
Given a game $\Gamma = (p^{(1)}, p^{(2)}, A)$ and a goal sequence $\gamma \in A^{r}$, its corresponding goal value is 
\begin{equation*}
 v_{\gamma} := 
 (v^{(1)}_\gamma, v^{(2)}_\gamma) = \textstyle \left(\frac{1}{r}\sum_{j = 1}^r p^{(1)}(\gamma_j), \frac{1}{r}\sum_{j = 1}^r p^{(2)}(\gamma_j)\right).   
\end{equation*}
\end{definition}

\begin{definition}[Deviation payoff]
  Given a game $\Gamma = (p^{(1)}, p^{(2)}, A)$, and action pair $a=(a^{(1)},a^{(2)})\in A$, define
\begin{equation*}
    d_a^{(1)} := \max_{\tilde{a}^{(1)}} p^{(1)}(\tilde{a}^{(1)}, a^{(2)}), \quad d_a^{(2)} := \max_{\tilde{a}^{(2)}} p^{(2)}(a^{(1)}, \tilde{a}^{(2)}).
\end{equation*}
\end{definition}

\begin{restatable}{theorem}{stableexistence}
\label{thm: folk theorem equil}
Let $\Gamma = (p^{(1)}, p^{(2)}, A)$ be a game and $\gamma \in A^{r}$.
\begin{enumerate}
    \item Suppose there exists $a \in A$  such that $ d_a^{(1)} < v_{\gamma}^{(1)}$
and $d_a^{(2)} < v_{\gamma}^{(2)}$. Then, for large enough $\beta \in (0,1) $ and  $T \in \N$, the sequence $\sigma$ repeating $a$ for $T$ time steps and then repeating $\gamma$ forever is stable. 
    \item If for all $a \in A$ we have $ d_a^{(1)} > v_{\gamma}^{(1)}$ or $ d_a^{(2)} > v_{\gamma}^{(2)}$ then, for any $\beta$, no stable sequence has $\gamma$ as its goal sequence. 
\end{enumerate}
\end{restatable}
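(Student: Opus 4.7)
My plan is to prove the two parts separately, using Proposition~\ref{prop: deviate only once} to reduce stability to one-shot deviation checks. Writing $U_k^{(i)}(\sigma) := \sum_{t=0}^\infty \beta^t p^{(i)}(\sigma_{k+t})$ for the continuation utility at time $k$ and $G^{(i)}_s(\beta) := \sum_{t=0}^\infty \beta^t p^{(i)}(\gamma_{(s+t) \bmod r})$ for the goal-cycle utility starting at phase $s$, a simple rearrangement of Definition~\ref{def: stable, fixed role} shows that $\sigma$ is stable iff
\[
p^{(i)}(\tilde a^{(i)}, \sigma_k^{(-i)}) + \beta\, U_0^{(i)}(\sigma) \le U_k^{(i)}(\sigma) \quad \text{for all } k\in\N,\ \tilde a^{(i)}\in A^{(i)},\ i\in\{1,2\}.
\]
Setting $\tilde a^{(i)} = \sigma_k^{(i)}$ gives the pure-restart constraint $U^{(i)}_{k+1}(\sigma) \ge U^{(i)}_0(\sigma)$, while maximizing over $\tilde a^{(i)}$ gives the deviation constraint $d_{\sigma_k}^{(i)} + \beta U_0^{(i)}(\sigma) \le U_k^{(i)}(\sigma)$.

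For part (1) I would take $\sigma$ to play $a$ for $T$ rounds and then $\gamma$ forever, and analyze $U_k^{(i)}(\sigma) - \beta U_0^{(i)}(\sigma)$ as $\beta \to 1$. A direct geometric-series calculation gives limit $v_\gamma^{(i)} + k(v_\gamma^{(i)} - p^{(i)}(a))$ when $0 \le k < T$, and limit $v_\gamma^{(i)} + T(v_\gamma^{(i)} - p^{(i)}(a)) + (c^{(i)}_s - c^{(i)}_0)$ when $k \ge T$, where $s = (k-T)\bmod r$ and each $c^{(i)}_s$ is a bounded phase correction arising from the $O(1)$ term of $G^{(i)}_s(\beta) - v_\gamma^{(i)}/(1-\beta)$. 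Since $p^{(i)}(a) \le d_a^{(i)} < v_\gamma^{(i)}$ strictly, the hazing-period constraint (tightest at $k=0$) reduces in the limit to $d_a^{(i)} \le v_\gamma^{(i)}$, which holds for $\beta$ close to $1$ by continuity. Choosing $T$ large enough that $T(v_\gamma^{(i)} - p^{(i)}(a))$ dominates $\max_{s,i}\bigl(d_{\gamma_s}^{(i)} - v_\gamma^{(i)} - (c^{(i)}_s - c^{(i)}_0)\bigr)$ and then $\beta$ close enough to $1$ handles the goal period.

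For part (2) I would argue by contradiction: suppose some $\sigma$ with goal $\gamma$ is stable at some $\beta \in (0,1)$. The pure-restart constraint, applied at the end of the hazing period and throughout the first goal cycle, yields $G^{(i)}_s(\beta) \ge U_0^{(i)}(\sigma)$ for every $s \in \{0,\ldots,r-1\}$ and both players. Swapping the order of summation establishes the key identity
\[
\sum_{s=0}^{r-1} G^{(i)}_s(\beta)\;=\;\sum_{t=0}^\infty \beta^t\!\sum_{s=0}^{r-1} p^{(i)}\!\bigl(\gamma_{(s+t)\bmod r}\bigr)\;=\;\frac{r\, v_\gamma^{(i)}}{1-\beta},
\]
so summing the restart inequalities forces $(1-\beta)U_0^{(i)}(\sigma)\le v_\gamma^{(i)}$ for each $i$. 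The $k=0$ deviation constraint reduces to $p^{(i)}(\tilde a^{(i)},\sigma_0^{(-i)}) \le (1-\beta)U_0^{(i)}(\sigma)$, and maximizing over $\tilde a^{(i)}$ yields $d_{\sigma_0}^{(i)} \le v_\gamma^{(i)}$ for both players simultaneously. Since $\sigma_0 \in A$, this directly contradicts the hypothesis applied to $a = \sigma_0$.

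The main technical obstacle should be the asymptotic bookkeeping in part (1): carefully extracting the leading and subleading behavior of $U_k^{(i)} - \beta U_0^{(i)}$ so that the phase corrections in the goal case are uniformly bounded across positions, and choosing the quantifiers $T$ and $\beta$ in the right order. Part (2) is combinatorially much cleaner once one notices the averaging identity for $\sum_s G^{(i)}_s(\beta)$ and realizes that only the restart inequalities and the $k=0$ deviation inequality are needed---in particular, one does not have to analyze deviations inside the goal cycle at all.
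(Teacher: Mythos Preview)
Your proposal is correct. For Claim~1 you and the paper take essentially the same route: both reduce stability to the limiting inequality $(k+1)v_\gamma^{(i)} > d_{\sigma_k}^{(i)} + \sum_{t<k} p^{(i)}(\sigma_t)$ and then verify it for the proposed $\sigma$. The paper packages the limit computation into a separate result (Theorem~\ref{thm: stabilitywithlimit}) and bounds the goal-period terms via the concrete choice $T\delta > r\kappa$ with $\kappa$ the largest payoff; you instead carry out the continuation-utility asymptotics directly and express the goal-period slack through the phase corrections $c_s^{(i)}$. Either bookkeeping works, and your quantifier order (fix $T$, then $\beta$) is the right one.

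For Claim~2 your argument is genuinely different and more explicit than the paper's. Both proofs focus on the first action $\sigma_0$. The paper asserts that a player with $d_{\sigma_0}^{(i)} > v_\gamma^{(i)}$ profits by repeatedly deviating at round~$0$ because this beats ``what she would have gotten upon reaching the goal sequence,'' leaving implicit why the \emph{full} sequence utility $U_0^{(i)}(\sigma)$---which includes a hazing period that could in principle be favorable to player~$i$---cannot exceed $v_\gamma^{(i)}/(1-\beta)$. You supply exactly this step: the restart constraints force $G_s^{(i)}(\beta) \ge U_0^{(i)}(\sigma)$ for every phase $s$, and your averaging identity $\sum_{s} G_s^{(i)}(\beta) = r\,v_\gamma^{(i)}/(1-\beta)$ then yields $(1-\beta)U_0^{(i)}(\sigma) \le v_\gamma^{(i)}$ at every $\beta$. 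Combined with the $k=0$ deviation inequality this gives $d_{\sigma_0}^{(i)} \le v_\gamma^{(i)}$ for both players simultaneously, contradicting the hypothesis at $a=\sigma_0$. Your route isolates precisely which stability constraints are needed (restart inequalities inside one goal cycle plus a single deviation check at $k=0$) and avoids any per-timestep comparison between $d_{\sigma_0}^{(i)}$ and the individual goal payoffs.
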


The proof of the first part is inspired by the Folk theorem. The second part follows from considering stability at the first action pair in a candidate stable sequence.

\begin{restatable}{corollary}{stableexistencecor} \label{cor: folk theorem equil}
Let $\Gamma$ be symmetric and suppose there exists $a_* = (a^{(1)}_*, a^{(2)}_*), a \in A$ with $a_*$ maximum social welfare and $d_a^{(1)},  d_a^{(2)} < (p^{(1)}(a_*) + p^{(2)}(a_*))/2$. Then $\gamma = ( (a_*^{(1)}, a_*^{(2)}), (a_*^{(2)}, a_*^{(1)}))$ is the goal sequence of some stable sequence in $\Gamma$ by Theorem~\ref{thm: folk theorem equil}. Moreover, this stable sequence is Pareto-optimal, limit-utility fair, and welfare-maximizing.
\end{restatable}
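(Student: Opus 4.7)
First, I would apply Theorem~\ref{thm: folk theorem equil} Part 1 to $\gamma$. The symmetry of $\Gamma$ gives $p^{(1)}(a_*^{(2)}, a_*^{(1)}) = p^{(2)}(a_*^{(1)}, a_*^{(2)}) = p^{(2)}(a_*)$ and $p^{(2)}(a_*^{(2)}, a_*^{(1)}) = p^{(1)}(a_*)$. Averaging over the two rounds of $\gamma$ yields $v_\gamma^{(1)} = v_\gamma^{(2)} = (p^{(1)}(a_*) + p^{(2)}(a_*))/2$, so the corollary's hypothesis on $a$ becomes precisely the premise $d_a^{(1)} < v_\gamma^{(1)}$, $d_a^{(2)} < v_\gamma^{(2)}$ of Theorem~\ref{thm: folk theorem equil} Part 1. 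This produces the desired stable sequence $\sigma$ consisting of $a$ repeated $T$ times followed by $\gamma$.

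Next, for welfare-maximality, every goal round of $\sigma$ achieves the maximum social welfare $p^{(1)}(a_*) + p^{(2)}(a_*)$. Hence for any stable $\tilde\sigma$, the termwise gap $p^{(1)+(2)}(\sigma_t) - p^{(1)+(2)}(\tilde\sigma_t)$ is non-negative on the tail $t \geq T$, so $\lim_{\beta \to 1} \sum_t \beta^t \left(p^{(1)+(2)}(\sigma_t) - p^{(1)+(2)}(\tilde\sigma_t)\right) \geq 0$, diverging to $+\infty$ whenever $\tilde\sigma$ misses max welfare infinitely often. Pareto-optimality then follows at once: if some $\tilde\sigma$ surpassed $\sigma$, summing its strict and weak per-player Pareto improvements would give $\tilde\sigma$ strictly greater total welfare than $\sigma$, contradicting welfare-maximality.

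For limit-utility fairness, I would exploit the swap structure of $\gamma$: by symmetry of $\Gamma$, $p^{(1)}(\sigma_t) - p^{(2)}(\sigma_t)$ inside the goal alternates between $+(p^{(1)}(a_*) - p^{(2)}(a_*))$ and $-(p^{(1)}(a_*) - p^{(2)}(a_*))$, giving both players the same long-run average payoff; choosing $T$ at the start of the goal, the discounted tail reduces to a bounded alternating series whose Abel limit captures the equality of the two players' long-run utilities. The main obstacle I foresee is rigorously handling welfare-maximality against competitors $\tilde\sigma$ whose tails also attain max welfare but whose prefixes are shorter or cheaper, so that the welfare gap reduces to a finite, possibly negative, prefix contribution. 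My plan there is to exploit the freedom to take the hazing length $T$ in the construction minimally compatible with stability, so that no $\gamma$-goal competitor can strictly improve on $\sigma$'s prefix without violating its own stability.
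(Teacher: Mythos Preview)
Your argument for existence, limit-utility fairness, and the deduction of Pareto-optimality from welfare-maximality mirrors the paper's proof exactly: the paper simply observes $v_\gamma^{(1)} = v_\gamma^{(2)} = (p^{(1)}(a_*)+p^{(2)}(a_*))/2$, invokes Theorem~\ref{thm: folk theorem equil}, and reads off the three properties in one line each. Your treatment of limit-utility fairness via the alternating structure of $\gamma$ is just an unpacking of the paper's one-clause justification ``$v_\gamma^{(1)} = v_\gamma^{(2)}$''.

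Where you diverge is on welfare-maximality. The paper's proof is a single sentence (``follows from the fact that $a_*$ is maximum social welfare''), which, as you correctly notice, only controls the tail: it does not rule out a competing stable $\tilde\sigma$ that also reaches a max-welfare goal but with a cheaper finite prefix, making $\lim_{\beta\to 1}\sum_t \beta^t\bigl(p^{(1)+(2)}(\sigma_t)-p^{(1)+(2)}(\tilde\sigma_t)\bigr)$ a finite \emph{negative} constant. You are right to flag this; the paper's proof does not address it.

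However, your proposed fix---taking $T$ minimal for the given hazing action $a$---does not close the gap either. A competitor $\tilde\sigma$ need not haze with $a$ at all: it may use a different action pair with smaller per-round welfare loss, or a mixture of actions, or even a different max-welfare goal sequence, any of which could yield strictly less total prefix loss than your minimal-$T$ sequence. Indeed, finding the prefix with least total welfare loss is precisely the $\prob$ problem the paper later shows is $\np$-hard, so ``minimal $T$ for this particular $a$'' cannot in general coincide with the true welfare-maximizing prefix. To make the corollary literally true you would need to take $\sigma$ to be a \emph{minimum-total-hazing} stable sequence with goal $\gamma$ (whose existence is the content of Section~\ref{sec: algorithmic results}), not merely the Theorem~\ref{thm: folk theorem equil} construction with minimal $T$. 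The paper's proof elides this distinction; your proposal identifies the issue but does not resolve it.
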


\section{Computing Minimum Hazing Sequences} \label{sec: algorithmic results}

We next consider the problem of computing stable sequences with minimum hazing in the $\beta \to 1$ limit. We begin by defining the \emph{hazing cost} and \emph{threshold} for a given action pair.

\begin{definition}[Hazing Cost, Threshold] \label{defn: hazing, thresh}
    For a game $\Gamma = (p^{(1)}, p^{(2)}, A)$, goal sequence $\gamma \in A^r$, $a \in A$, and $i \in [2]$, we define the \emph{hazing cost} $h_{a}^{(i)} := v_{\gamma}^{(i)} - p^{(i)}(a)$ and the \emph{threshold} $t_{a}^{(i)} := d_a^{(i)} - v_{\gamma}^{(i)}$ for player $i$.
\end{definition}
The \emph{hazing cost} of an action pair for a player defines how much utility that player loses in the long run by playing that action compared to her average utility in the goal sequence.\footnote{Note that the hazing cost could be negative for some actions.} Intuitively, we want the hazing sequence to have a sufficiently high cost for both players to disincentivize them from taking an action that would result in restarting the sequence. As we will see, the \emph{threshold} of an action pair for a given player defines the amount of total hazing that that player must have accumulated before playing that action in order to guarantee that they will not deviate. In Theorem~\ref{thm: stabilitywithlimit}, we will make use of these two definitions to give a sufficient and necessary condition for stability in the $\beta \to 1$ limit.

For conciseness, we also define notation for total hazing and the threshold for a goal sequence.

\begin{definition}[Total Hazing]
Let $\sigma \in A^{\N}$ be a stable sequence with goal sequence $\gamma \in A^r$ for a game $\Gamma = (p^{(1)}, p^{(2)}, A)$. For each $k \in \N$, define the total hazing up to time $k$, $H_k := (H^{(1)}_k, H^{(2)}_k) = \sum_{t = 0}^k h_{\sigma_t}$,
to be the sum of hazing costs of the first $k + 1$ actions in the hazing period.
\end{definition}

\begin{definition}[Threshold for a goal sequence] \label{defn: thresh goal seq}
For a goal sequence $\gamma \in A^{r}$ and $i \in [2]$, define its \emph{threshold} as:
\begin{align}
\textstyle\theta_{\gamma}^{(i)} = \max_{k \in [r]} \left( t_{\gamma_k}^{(i)}  - \sum_{t = 1}^{k -1} h_{\gamma_t}^{(i)}\right). \label{seq_threshold}
\end{align}
\end{definition}

In words, for each $k \in [r]$, we need to surpass the threshold for $\gamma_k$ upon reaching it, which is only  possible by accumulating $\theta_\gamma^{(i)}$ hazing for each player $i$ by the time the goal sequence is reached. The summation in equation (\ref{seq_threshold}) accounts for the change in total hazing (after the goal sequence begins) up to the $(k-1)^{\text{th}}$ action pair in the goal sequence. 

Finally, Theorem~\ref{thm: stabilitywithlimit} defines the stability of a sequence in the  $\beta \to 1$ limit. 

\begin{restatable}{theorem}{stabilitywithlimit} \label{thm: stabilitywithlimit}
Let $\Gamma = (p^{(1)}, p^{(2)}, A)$ be a game. A sequence $\sigma \in A^{\N}$ with finite hazing period and goal sequence $\gamma \in A^r$ is stable for all sufficiently large $\beta \in (0,1)$ if and only if, for all $k \in \N$ and $i \in [2]$, we have  $H_{k-1}^{(i)}  >  t_{\sigma_k}^{(i)}.$
\end{restatable}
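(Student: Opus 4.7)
The plan is to invoke the one-shot deviation principle (Proposition~\ref{prop: deviate only once}), which reduces stability of $\sigma$ to, for each player $i \in [2]$ and time $k \in \N$, the inequality $d_{\sigma_k}^{(i)} + \beta\, U^{(i)}(\sigma) \leq U^{(i)}_k(\sigma)$, where I write $U^{(i)}(\sigma) := \sum_{t=0}^{\infty}\beta^{t} p^{(i)}(\sigma_t)$ and $U^{(i)}_k(\sigma) := \sum_{t=0}^{\infty}\beta^{t} p^{(i)}(\sigma_{k+t})$. This inequality arises from writing out Definition~\ref{def: stable, fixed role}, canceling the common length-$k$ prefix, dividing by $\beta^k$, and maximizing over the deviating action $a^{(i)}$ (which yields $d_{\sigma_k}^{(i)}$). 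Since $d_{\sigma_k}^{(i)} \geq p^{(i)}(\sigma_k)$, this inequality also subsumes the ``restart without deviation'' constraint $U^{(i)}(\sigma) \leq U^{(i)}_{k+1}(\sigma)$, so we need not separately check pure-restart deviations.

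The crux of the argument is computing $\lim_{\beta \to 1}[U^{(i)}_k(\sigma) - \beta\, U^{(i)}(\sigma)]$. Using the identity $U^{(i)}(\sigma) = \sum_{t=0}^{k-1} \beta^t p^{(i)}(\sigma_t) + \beta^k U^{(i)}_k(\sigma)$, I rewrite $U^{(i)}_k(\sigma) - U^{(i)}(\sigma) = (1 - \beta^k)\, U^{(i)}_k(\sigma) - \sum_{t=0}^{k-1} \beta^t p^{(i)}(\sigma_t)$. Because the tail of $\sigma$ is the period-$r$ goal sequence with mean payoff $v_\gamma^{(i)}$, a geometric-series expansion gives $(1 - \beta)\, U^{(i)}_k(\sigma) \to v_\gamma^{(i)}$ as $\beta \to 1$, and hence $(1 - \beta^k)\, U^{(i)}_k(\sigma) \to k\, v_\gamma^{(i)}$. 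Combined with $\sum_{t=0}^{k-1}\beta^t p^{(i)}(\sigma_t) \to \sum_{t=0}^{k-1} p^{(i)}(\sigma_t)$, this yields $\lim_{\beta\to 1}[U^{(i)}_k(\sigma) - U^{(i)}(\sigma)] = k v_\gamma^{(i)} - \sum_{t=0}^{k-1} p^{(i)}(\sigma_t) = H_{k-1}^{(i)}$. Adding $(1-\beta)U^{(i)}(\sigma) \to v_\gamma^{(i)}$, I conclude that $\lim_{\beta \to 1}[U^{(i)}_k(\sigma) - \beta U^{(i)}(\sigma)] = H_{k-1}^{(i)} + v_\gamma^{(i)}$, and substituting $t_{\sigma_k}^{(i)} = d_{\sigma_k}^{(i)} - v_\gamma^{(i)}$ into the stability inequality gives the limit condition $t_{\sigma_k}^{(i)} \leq H_{k-1}^{(i)}$.

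It remains to lift this limit condition to the theorem's ``strict inequality for all sufficiently large $\beta$'' formulation, and to handle the a priori infinite family of conditions. For fixed $k$, the stability slack $U^{(i)}_k(\sigma) - \beta U^{(i)}(\sigma) - d_{\sigma_k}^{(i)}$ is a continuous rational function of $\beta$ with limit $H_{k-1}^{(i)} - t_{\sigma_k}^{(i)}$: a strictly positive limit guarantees the slack is positive on a left neighborhood of $1$; a strictly negative limit precludes stability there; and the borderline equality case is handled by examining the explicit rational form of the slack near $\beta = 1$. For the infinite family, I observe that past the hazing period the conditions become $r$-periodic in $k$, because $\sum_{j=0}^{r-1} h_{\gamma_j}^{(i)} = 0$ makes both $H_{k-1}^{(i)}$ and $t_{\sigma_k}^{(i)}$ periodic; hence only finitely many distinct conditions must be checked, and the maximum of the corresponding $\beta$-thresholds gives a single uniform $\beta_0 < 1$ above which $\sigma$ is stable. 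The main obstacle is executing the $\beta \to 1$ asymptotics cleanly, since $U^{(i)}(\sigma)$ and $U^{(i)}_k(\sigma)$ individually diverge while their weighted difference must be extracted to a finite limit via careful expansion of the periodic tail.
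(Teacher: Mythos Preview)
Your proposal is correct and follows essentially the same route as the paper: rewrite the one-shot-deviation inequality, take the $\beta \to 1$ limit of the slack (you do this via the identity $(1-\beta^k)\,U^{(i)}_k(\sigma)\to k\,v_\gamma^{(i)}$; the paper writes out the geometric tail explicitly and applies L'H\^{o}pital), and then treat the boundary equality case separately. You add one useful ingredient the paper leaves implicit---the reduction of the infinitely many $k$-indexed conditions to finitely many via the $r$-periodicity of the tail---while on the equality case both arguments are comparably informal (the paper asserts without proof that the $\beta$-derivative of the slack is strictly positive near~$1$, whereas you defer to ``examining the explicit rational form'' without stating the conclusion; in either case what must be shown is that when $H_{k-1}^{(i)} = t_{\sigma_k}^{(i)}$ the slack approaches $0$ from below).
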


The intuition for the strict inequality here is that ties break in favor of deviating, since the deviation payoff comes earlier.

We can also characterize stability in the limit $\beta \to 1$ using the language of thresholds of the goal sequence.
\begin{corollary} \label{corollary: goal sequence thresh stab}
A sequence $\sigma$ formed by a hazing period of length $T$ and a repeated goal sequence $\gamma \in A^r$ is stable in the $\beta \to 1$ limit if and only if $\sigma$ is stable in the hazing period in the $\beta \to 1$ limit and, for $i \in [2]$, $H_T^{(i)} > \theta_{\gamma}^{(i)}$.
\end{corollary}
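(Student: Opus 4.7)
The plan is to invoke Theorem~\ref{thm: stabilitywithlimit} directly and partition the countably many stability inequalities $H_{k-1}^{(i)} > t_{\sigma_k}^{(i)}$ that it yields into two groups: those where $\sigma_k$ lies in the hazing period, and those where $\sigma_k$ lies in the repeated goal sequence. The first group is, by definition, precisely the statement that $\sigma$ is stable on its hazing period in the $\beta \to 1$ limit, handling one piece of the corollary immediately.

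For the second group, the key arithmetic observation is that the total hazing cost accumulated over one period of $\gamma$ vanishes:
\[
\sum_{t=1}^{r} h_{\gamma_t}^{(i)} \;=\; \sum_{t=1}^{r}\!\left( v_{\gamma}^{(i)} - p^{(i)}(\gamma_t) \right) \;=\; r\,v_{\gamma}^{(i)} - r\,v_{\gamma}^{(i)} \;=\; 0.
\]
Hence, once the goal sequence starts at time $T$, the running total $H_k^{(i)}$ is exactly $r$-periodic in $k$, which collapses the infinitely many stability inequalities over the goal sequence into the $r$ inequalities spanning a single cycle of $\gamma$.

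I would then unravel those $r$ inequalities. Just before the $k$-th action pair of the cycle ($k \in [r]$), the accumulated hazing equals $H_T^{(i)} + \sum_{t=1}^{k-1} h_{\gamma_t}^{(i)}$, so Theorem~\ref{thm: stabilitywithlimit} requires
\[
H_T^{(i)} + \sum_{t=1}^{k-1} h_{\gamma_t}^{(i)} \;>\; t_{\gamma_k}^{(i)} \;\;\Longleftrightarrow\;\; H_T^{(i)} \;>\; t_{\gamma_k}^{(i)} - \sum_{t=1}^{k-1} h_{\gamma_t}^{(i)}.
\]
Maximizing the right-hand side over $k \in [r]$ and invoking Definition~\ref{defn: thresh goal seq} reduces the entire group to the single inequality $H_T^{(i)} > \theta_{\gamma}^{(i)}$ for each $i \in [2]$, which is exactly the remaining condition of the corollary. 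Combining the two groups yields the biconditional.

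The argument calls for no new ideas beyond Theorem~\ref{thm: stabilitywithlimit} and the zero-per-cycle cancellation; the only real pitfall is bookkeeping around indexing conventions --- aligning precisely where $\gamma_1$ sits inside $\sigma$ so that $H_T$ rather than $H_{T-1}$ appears on the right --- which must be made consistent with Definition~\ref{defn: thresh goal seq} and with the meaning of ``hazing period of length $T$.''
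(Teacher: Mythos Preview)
Your proposal is correct and is precisely the natural derivation the paper has in mind: the paper states this corollary without proof, treating it as an immediate consequence of Theorem~\ref{thm: stabilitywithlimit}, and your argument---splitting the inequalities at time $T$, using the zero-per-cycle identity $\sum_{t=1}^{r} h_{\gamma_t}^{(i)} = 0$ to collapse the goal-sequence constraints to one period, and then recognizing Definition~\ref{defn: thresh goal seq}---is exactly how that consequence is obtained. Your caveat about the $H_T$ versus $H_{T-1}$ indexing is well placed, as the paper's own conventions (compare Definition~\ref{defn: genoptrep}, which uses $H_{\ell-1}$) are not entirely consistent here.
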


Using the necessary and sufficient conditions for stability from Corollary~\ref{corollary: goal sequence thresh stab}, we now define the computational problem of finding sequences inducing the minimum possible hazing.

\begin{definition}[$\prob$] \label{defn: genoptrep}
Denote the hazing and threshold tuples for each action pair and each player in a game as $\Bigl\{ \left(h^{(1)}_{a}, h^{(2)}_{a}, t^{(1)}_{a}, t^{(2)}_{a} \right) \Bigl\}_{a \in A} \in \left(\left(1/r \cdot \Z\right)^{4}\right)^{|A|}$
    . Also let $(\theta^{(1)}, \theta^{(2)}) \in (\frac{1}{r} \cdot \Z)^{2}$ be the thresholds for a goal sequence for each player. Given $\Delta > 0$, $\prob$ asks to find a sequence $\sigma \in A^{\ell}$ (for any finite $\ell$) such that the sum of the total hazings satisfies $H_{\ell-1}^{(1) + (2)} = \sum_{t=0}^{\ell - 1} h_{\sigma_t}^{(1) + (2)} \leq \Delta$, subject to:
\begin{enumerate}
    \item $H_{\ell-1}^{(i)}  > \theta^{(i)}$, $\quad \forall i \in [2]$
    \item $ H_{k-1}^{(i)} > t_{\sigma_k}^{(i)}, \quad \forall k \in \{0, \cdots, \ell-1\}, \quad \forall i \in [2]$
\end{enumerate}
\end{definition}

Notice that, by Theorem \ref{thm: folk theorem equil}, in some games it is easy to compute hazing sequences that induce stable sequences with goal sequence $\gamma$. In fact, if there exists some action pair $a \in A$ that satisfies the conditions of Theorem \ref{thm: folk theorem equil}, repeating $a$ sufficiently many times makes for such a hazing sequence. Moreover, the sum of the players' total hazing will be at most: 
\[
\bd := 
\theta^{(1)}_{\gamma} + \theta^{(2)}_{\gamma} + h_a^{(1)} + h_a^{(2)} \leq (r +2)\kappa,
\]
where $\kappa$ is the difference between the largest and smallest possible payoff values in $\Gamma$. However, this trivial upper bound, $\bd$, could be arbitrarily larger than the minimum hazing possible, which is what we explore in this section.

\begin{remark}\label{remark:rationalhazings}
In the instance of $\prob$ induced by $\Gamma$ and a finite length goal sequence $\gamma \in A^r$, we have $(\theta_{\gamma}^{(1)}, \theta_{\gamma}^{(2)}) \in (\frac{1}{r} \cdot \Z)^2$ and $\{(h^{(1)}_{\sigma}, h^{(2)}_{\sigma}, t^{(1)}_{\sigma}, t^{(2)}_{\sigma})\}_{\sigma \in A} \in ((\frac{1}{r} \cdot \Z)^{4})^{|A|}$. This follows directly from Definitions~\ref{defn: hazing, thresh} and~\ref{defn: thresh goal seq}. We make heavy use of this fact in Algorithm~\ref{alg:dp}.
\end{remark}

\begin{restatable}{theorem}{nphardness} \label{thm: np hardness}
$\prob$ is (weakly) $\np$-hard. 
\end{restatable}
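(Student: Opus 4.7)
The plan is to reduce from the (weakly) NP-hard symmetric MinHazing problem of \citet{berker2024computing}, exploiting the fact that $\prob$ strictly generalizes their symmetric problem. Given a symmetric instance with action set $A$, per-action data $\{(h_a, t_a)\}_{a \in A}$, goal threshold $\theta$, and bound $\Delta$, I would construct an instance of $\prob$ on the same action set with both players' parameters set equal: $(h^{(1)}_a, h^{(2)}_a, t^{(1)}_a, t^{(2)}_a) = (h_a, h_a, t_a, t_a)$ for each $a$, $(\theta^{(1)}, \theta^{(2)}) = (\theta, \theta)$, and bound $2\Delta$.

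Under this embedding, the two-player stability constraints in Definition~\ref{defn: genoptrep} each become two identical copies of the symmetric stability constraint $H_{k-1} > t_{\sigma_k}$, the two-player goal requirement collapses to $H_{\ell-1} > \theta$, and the objective satisfies $H^{(1)+(2)}_{\ell-1} = 2 H_{\ell-1}$. Consequently, a sequence $\sigma$ is feasible for the asymmetric instance with total hazing at most $2\Delta$ if and only if it is feasible for the symmetric instance with hazing at most $\Delta$. The reduction is clearly polynomial-time many-one, so $\prob$ inherits the weak NP-hardness.

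The main obstacle is verifying that the version of the symmetric problem proved hard in \citet{berker2024computing} is stated at the level of hazing/threshold tuples, matching Definition~\ref{defn: genoptrep}. If their formulation instead takes a symmetric \emph{game} as input and defines the hazings and thresholds implicitly from the goal sequence, a short additional step is needed to re-derive the polynomial bit-length of the induced data, using the integrality of the payoffs together with the fact that all quantities lie in $\frac{1}{r}\cdot\Z$ (Remark~\ref{remark:rationalhazings}). A self-contained fallback would be a direct reduction from Subset Sum, encoding each element $s_i$ as an action of hazing $s_i$ (identical across the two players) and designing thresholds that force the sequence to reach a prescribed total hazing while effectively forbidding reuse of each element-action. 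Enforcing such one-shot use through thresholds alone---since $\prob$ otherwise permits arbitrary repetition of any action---would be the main technical subtlety on that alternative route.
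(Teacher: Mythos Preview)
Your primary reduction is correct: since $\prob$ is stated at the level of abstract hazing/threshold tuples indexed by a finite set $A$ (Definition~\ref{defn: genoptrep}), the symmetric instance embeds directly by duplicating each action's data across both players, and the bijection you describe between feasible sequences is exact. This route is genuinely different from the paper's proof, which gives a self-contained reduction from Unbounded Subset-Sum: it builds an explicit symmetric game with a designated goal action $a_0$ of payoff $B$, hazing actions $a_i$ of payoff $B - b_i$, one extra action $a_{n+1}$ that makes the goal threshold equal $B-1$, and all remaining off-diagonal payoffs set to $-\infty$ so that no asymmetric action pair can appear in a minimum-hazing sequence. Your approach is more economical when one is willing to cite \citet{berker2024computing} as a black box; the paper's approach is self-contained and avoids any concern about whether their formulation is at the tuple level or the game level.

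One correction to your fallback: the worry about enforcing one-shot use of each Subset-Sum element is unnecessary. Unbounded Subset-Sum, which allows each integer to be used any nonnegative number of times, is already weakly NP-hard, and this is precisely the variant the paper reduces from. With that choice, the thresholds for all hazing actions can be made trivially satisfied from the outset, and only the goal threshold is binding---so no gadgetry is needed to forbid repetition.
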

\emph{Proof Sketch.} The key idea is to reduce from the Unbounded Subset-Sum Problem with non-negative integers. We define a symmetric game  in which: all but the threshold for the goal sequence action pair are trivially met and only the main diagonal action pairs are viable in a minimum hazing stable sequence. The payoffs on the main diagonal can then be chosen such that their hazing costs correspond to the integers from the instance of Unbounded Subset-Sum.

Although $\prob$ is well-defined in broad generality, we are mostly interested in the problem of computing minimum hazing sequences for welfare-maximizing stable sequences with infinitely repeated finite-length goal sequences. So, we define the following computational problem. 

\begin{definition}[$\welfprob$]
Consider a game, $\Gamma = (p^{(1)}, p^{(2)}, A)$, and a goal sequence, $\gamma \in A^r$, where for each $t \in [r]$, $\gamma_t \in A$ is a maximum social welfare action pair. Suppose also that $\gamma$ is the goal sequence of a stable sequence, $\sigma$, that achieves total sum of hazings $\bd$. Then $\welfprob(\Gamma, \gamma, B)$ is the instance of $\prob$ induced by $\Gamma$, $\gamma$, and total hazing bound $B$. 
\end{definition}

\begin{remark} \label{remark:sumofhazingsperround}
Since $\gamma_t \in A$ is maximum social welfare for each $t \in [r]$, we know that each $a \in A$ has $h^{(1)}_a + h^{(2)}_a \geq 0$. Indeed, if not, then $a$ would induce higher social welfare than the per-round average payoff in $\gamma$, a contradiction.
\end{remark}

We will use the structure of $\welfprob$ to show that for any stable sequence with goal sequence $\gamma \in A^r$, there exists another highly structured stable sequence with goal sequence $\gamma$ and minimum hazing (Lemma~\ref{lem: hazing per struct}). This insight will allow us to solve  $\welfprob$ in pseudo-polynomial time (Theorem \ref{thm:dp}).

\begin{restatable}{lemma}{hazingstruct} \label{lem: hazing per struct}
For a game $\Gamma$, let $\gamma \in A^r$ be a maximum social welfare goal sequence, such that there exists a stable sequence with goal sequence $\gamma$ and total hazing $\bd$. Then, there exists a \emph{minimum hazing} stable sequence $\sigma$ with goal sequence $\gamma$, such that for $k$ and $k_1 < k_2$ in the hazing period:
\\ \textbf{\textup{1. Total hazing bound:}} $0 \leq H_k^{(1)}, H_k^{(2)}, H_k^{(1) + (2)} \leq \bd.$
\\ \textbf{\textup{2. Monotonicity of total hazing:}} $H_{k_2}^{(1) + (2)}  - H_{k_1}^{(1) + (2)}\geq 0.$
\\ \textbf{\textup{3. Injectivity of total hazing:}} $(H_{k_1}^{(1)}, H_{k_1}^{(2)}) \neq (H_{k_2}^{(1)}, H_{k_2}^{(2)}).$

\end{restatable}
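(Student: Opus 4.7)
The plan is to start from any minimum-hazing stable sequence $\sigma^*$ with goal sequence $\gamma$ and either verify each of the three structural properties for $\sigma^*$ directly, or modify $\sigma^*$ by a length-reduction that preserves stability, goal sequence, and total hazing.

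First I would establish monotonicity (property 2) and the lower bounds in property 1, both of which hold automatically for any stable sequence with this goal sequence. Since $\gamma$ consists of maximum-social-welfare pairs, Remark~\ref{remark:sumofhazingsperround} gives $h^{(1)}_a + h^{(2)}_a \geq 0$ for every $a \in A$, so $H_k^{(1)+(2)}$ is non-decreasing in $k$, and in particular $H_k^{(1)+(2)} \geq 0$. The individual non-negativity $H_k^{(i)} \geq 0$ is less immediate, because individual hazings can be negative. The key observation is that for every $a \in A$, $t_a^{(i)} = d_a^{(i)} - v_\gamma^{(i)} \geq p^{(i)}(a) - v_\gamma^{(i)} = -h_a^{(i)}$, since $d_a^{(i)}$ is a max over all unilateral deviations of player $i$. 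I would then induct on $k$: for $k = 0$, stability at step $0$ gives $0 = H_{-1}^{(i)} > t_{\sigma^*_0}^{(i)} \geq -h_{\sigma^*_0}^{(i)}$, forcing $h_{\sigma^*_0}^{(i)} > 0$ and hence $H_0^{(i)} > 0$. For the inductive step, if $h_{\sigma^*_k}^{(i)} \geq 0$, then $H_k^{(i)} \geq H_{k-1}^{(i)} \geq 0$ is immediate; otherwise $t_{\sigma^*_k}^{(i)} \geq -h_{\sigma^*_k}^{(i)} > 0$, and stability at step $k$ (from Theorem~\ref{thm: stabilitywithlimit}) yields $H_{k-1}^{(i)} > -h_{\sigma^*_k}^{(i)}$, so $H_k^{(i)} > 0$.

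Next I would enforce injectivity (property 3) by an iterative loop-removal argument. If distinct hazing-period indices $k_1 < k_2$ of $\sigma^*$ satisfy $(H_{k_1}^{(1)}, H_{k_1}^{(2)}) = (H_{k_2}^{(1)}, H_{k_2}^{(2)})$, then the block $\sigma^*_{k_1+1}, \ldots, \sigma^*_{k_2}$ contributes zero net hazing to each player. Excising it produces a shorter sequence $\sigma'$ with the same goal sequence. Stability at every index $k \leq k_1$ is unaffected, and for each index $k > k_1$ in $\sigma'$ the new prefix total hazing equals $H^{(i)}$ at the corresponding shifted original index (using the equality at $k_1, k_2$), so the threshold inequalities of Theorem~\ref{thm: stabilitywithlimit} carry over from the original sequence. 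The terminal total hazing is preserved, so $\sigma'$ remains minimum-hazing. Because this procedure strictly decreases the sequence length, it terminates at a sequence $\sigma$ whose hazing trajectory $k \mapsto (H_k^{(1)}, H_k^{(2)})$ is injective on the hazing period. Properties 1 (lower bounds) and 2, established above, depend only on stability and the maximum-social-welfare hypothesis, so they pass to $\sigma$.

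Finally, the upper bound $H_k^{(1)+(2)} \leq \bd$ in property 1 follows from property 2 together with the hypothesis of the lemma: since $\sigma$ is minimum-hazing and by assumption some stable sequence has total hazing $\bd$, we have $H^{(1)+(2)}_{\ell-1} \leq \bd$, and monotonicity propagates this to every earlier prefix. I expect the main obstacle to be the lower bound $H_k^{(i)} \geq 0$ when individual hazings can be negative: the argument turns crucially on the inequality $t_a^{(i)} \geq -h_a^{(i)}$, which links the threshold a player must surpass to the very drop in $H^{(i)}$ that playing $a$ would cause, so that stability itself absorbs any would-be dip below zero. Everything else is straightforward bookkeeping over the loop-removal procedure and monotonicity.
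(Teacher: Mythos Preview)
Your proposal is correct and follows essentially the same three-part structure as the paper's proof: monotonicity from $h_a^{(1)}+h_a^{(2)}\ge 0$ (maximum social welfare), loop removal for injectivity, and the upper bound from monotonicity plus minimality. The one place you diverge is the justification of $H_k^{(i)}\ge 0$: the paper argues in one sentence that a player with negative total hazing would profit by repeatedly restarting after that round, whereas you derive it formally from Theorem~\ref{thm: stabilitywithlimit} via the inequality $t_a^{(i)}\ge -h_a^{(i)}$. Your route is more explicit and in fact yields the strict inequality $H_k^{(i)}>0$ directly (the induction and case split are unnecessary: for every $k$, stability gives $H_{k-1}^{(i)} > t_{\sigma_k}^{(i)} \ge -h_{\sigma_k}^{(i)}$, hence $H_k^{(i)}>0$), but conceptually it is the same observation---the restart option is precisely what makes $d_a^{(i)}\ge p^{(i)}(a)$ the relevant bound.
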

\emph{Proof Sketch.} The first two properties follow from stability and the fact that each $ \gamma_t \in \gamma$ is maximum social welfare. The third property exploits the observation that we can remove segments of the hazing period between repeated pairs of total hazing values without compromising stability.

Although $\prob$ is not obviously in $\np$ in general (\emph{e.g.}, optimal stable sequences can have exponentially long hazing periods), $\welfprob$ is indeed in $\np$ under some weak additional structural assumptions.

\begin{restatable}{theorem}{npcompleteness}\label{thm: np-completeness}
$\welfprob(\Gamma, \gamma, B)$ (in its decision version) is in $\np$ for classes of games $\Gamma$, goal sequence $\gamma \in A^r$ with $|A| = n$ and $r = \poly(n)$, and $\bd$ such that either one of the following conditions is satisfied:
\begin{enumerate}
    \item There is a stable sequence with goal sequence $\gamma$ with total hazing $\poly(n)$, \emph{e.g.}, $B = \poly(n)$. 
    \item There exists a stable sequence with at most $\poly(n)$ action pairs with negative hazing value for either player.
\end{enumerate}
\end{restatable}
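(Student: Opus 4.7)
I will, for each condition, exhibit a polynomial-size witness describing a stable sequence with total hazing at most $\bd$, plus a polynomial-time verifier. The witness format is a block-wise description $(a_1, c_1), \ldots, (a_m, c_m)$ with each $a_j \in A$ and each positive count $c_j$ written in binary, representing the sequence that plays $a_1$ for $c_1$ rounds, then $a_2$ for $c_2$ rounds, and so on. The verifier computes the running total hazings $H_{k-1}^{(i)}$ at every block boundary and checks: (a) the total hazing bound $H_{\ell-1}^{(1)+(2)} \leq \bd$; (b) the goal-sequence thresholds $H_{\ell-1}^{(i)} > \theta_{\gamma}^{(i)}$ for $i \in [2]$; and (c) the stability inequality $H_{k-1}^{(i)} > t_{a_j}^{(i)}$ at each block's first position (when $h_{a_j}^{(i)} \geq 0$) and last position (when $h_{a_j}^{(i)} < 0$). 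Since $H^{(i)}$ is monotone within a single-action-pair block, these checks suffice to certify stability at every position, and all of them reduce to $O(m)$ arithmetic operations on rationals with $\poly(n, \log \bd)$-bit descriptions.

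\textbf{Condition 1.} Here, some stable sequence has total hazing $\bd^* \leq \poly(n)$, so the minimum hazing value is at most $\bd^*$. Applying Lemma~\ref{lem: hazing per struct} with total hazing bound $\bd^*$ yields a minimum-hazing stable sequence $\sigma$ whose running totals $(H_k^{(1)}, H_k^{(2)})$ lie in $[0, \bd^*]^2$, are multiples of $1/r$, and are pairwise distinct across $k$. Hence $\sigma$ has length at most $(r\bd^*+1)^2 = \poly(n)$. If $\bd$ is at least this minimum hazing, then $\sigma$ serves as a valid $\poly(n)$-bit witness; otherwise the instance is a ``no'' and no witness is required.

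\textbf{Condition 2.} Let $\sigma^*$ be a stable sequence with total hazing at most $\bd$ and at most $K = \poly(n)$ positions with negative hazing for at least one player. These positions split $\sigma^*$ into at most $K+1$ non-negative-hazing segments. Within each segment, I rearrange by grouping all occurrences of each distinct action pair into a single block, ordered by the action pairs' first occurrences in that segment. The delicate step is verifying that this rearrangement preserves stability: since every action pair within such a segment contributes non-negatively to both $H^{(1)}$ and $H^{(2)}$, collapsing earlier action pairs' occurrences into a single preceding block only shifts their (non-negative) contributions earlier, so the running total hazing just before the first occurrence of any distinct action pair can only grow; the original stability inequality there is preserved, and stability at subsequent positions within the resulting block follows from $h_a^{(i)} \geq 0$. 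The multiset of played action pairs in each segment is unchanged, so total hazing and the $H_{k-1}^{(i)}$ values at negative-hazing positions are both unaffected. Removing any action pair with $h = (0,0)$ further preserves everything, and the rearranged sequence has at most $(K+1) n^2 + K = \poly(n)$ blocks, each with strictly positive hazing for at least one player. Since the peak $H^{(i)}_k$ over all $k$ is at most $\bd + K\kappa$ (where $\kappa$ is the payoff range of $\Gamma$), each block's count is bounded by $r(\bd + K\kappa)$ and fits in $\poly(n, \log \bd)$ bits, so the overall witness has polynomial size.
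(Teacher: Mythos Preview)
Your argument is correct. For Condition~1 you match the paper exactly (Lemma~\ref{lem: hazing per struct} bounds the hazing-period length by $O(r^2 B^2)$). For Condition~2 you take a genuinely different route: the paper processes one nonnegative-hazing action pair at a time \emph{globally}, moving all of its occurrences to immediately follow its first occurrence, and iterates this over all such action pairs; the result has at most $n + K$ blocks. You instead use the $K$ negative-hazing positions as delimiters and rearrange \emph{within each segment} by grouping identical action pairs in order of first occurrence, yielding at most $(K+1)n + K$ blocks. Both rely on the same monotonicity observation (consolidating nonnegative-hazing contributions earlier can only raise the running totals at later thresholds), but your segment decomposition makes the invariant ``total hazing at each negative-hazing position is unchanged'' immediate, whereas the paper's global consolidation gives a slightly smaller block count. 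One cosmetic point: your bound ``peak $H^{(i)}_k \leq B + K\kappa$'' is correct but unnecessarily loose, since stability already forces $H^{(i)}_k \geq 0$ (a player with negative cumulative hazing would profitably restart), so $H^{(i)}_k \leq H^{(1)+(2)}_k \leq B$ directly; either way the block counts are $O(rB)$ and fit in $\poly(n,\log B)$ bits.
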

\emph{Proof Sketch.} The first sufficient condition follows from the second and third properties of Lemma~\ref{lem: hazing per struct} (or Theorem~\ref{thm:dp}). The second condition follows from the fact that, as long as the threshold for some $a \in A$ is met at some time step $t$ and it does not contribute negative hazing to either player, it can be inserted at time step $t$ without disrupting stability. This allows for clustering all such $a \in A$ into consecutive runs, yielding hazing sequences with succinct representations.

\begin{remark}
The first part of Theorem~\ref{thm: np-completeness} holds when the utilities in $\Gamma$ are bounded by $\poly(n)$ and there exists  $a \in A$ such that $v^{(i)}_\gamma > d_a^{(i)}$ for $i \in [2]$. Indeed, by Theorem~\ref{thm: folk theorem equil}, since the threshold for $\gamma$ is $\poly(n)$, repeating $a \in A$ for $\poly(n)$ iterations, followed by cycling through $\gamma$, results in a stable sequence with goal sequence $\gamma$ and $\poly(n)$ total hazing.
\end{remark}

Finally, we give a dynamic programming algorithm for solving $\welfprob$ (Algorithm~\ref{alg:dp}) and prove that it runs in pseudo-polynomial time (Theorem \ref{thm:dp}).

We start by giving an overview of Algorithm \ref{alg:dp}. Given as input the tuples of thresholds and hazing costs for each action pair, the threshold for the goal sequence, and the upper bound $\bd$ on the total hazing, the algorithm will first construct an empty queue, $Q$, and a matrix, $\mem$, indexed by all possible pairs of hazing values (rationals with denominator $r$ between $0$ and $\bd$ from Remark~\ref{remark:rationalhazings}). The algorithm will start with a total hazing value of 0 for each player (\emph{i.e.}, starts at the upper left entry of $\mem$), and, from that pair of hazing values, it will consider all possible action pairs available to the agents. For each action pair that satisfies the conditions of Lemma \ref{lem: hazing per struct} and whose threshold is met, the algorithm will compute the new reachable total hazing for each player by adding the hazing cost of that action pair to the the total hazing of each player so far. That will ``move'' the algorithm to a different entry in $\mem$, and the algorithm will enqueue that entry to be explored later. The algorithm then proceeds in a breadth-first-search manner, dequeuing pairs of hazing values from $Q$ (\emph{i.e.}, ``visiting'' entries in $\mem$) and considering all other entries in $\mem$ that can be reached by taking ``valid'' action pairs. While visiting each entry in $\mem$, the algorithm keeps track of the minimum total hazing value pair encountered so far that satisfies the goal threshold, as well as the appropriate information to reconstruct the minimum hazing sequence. The algorithm terminates when $Q$ is empty. Recall that we use $h^{(1) + (2)}$ as shorthand denoting $h^{(1)} + h^{(2)}$ (see Section~\ref{sec: prelim}).

\begin{algorithm}[t!]
    \caption{Dynamic Programming Algorithm for $\welfprob$}\label{alg:dp}
    \begin{algorithmic}[1]
    \STATE \textbf{Input:} (1.) Goal sequence threshold $\theta := (\theta^{(1)}, \theta^{(2)}) \in (\frac{1}{r} \cdot \Z)^2$, (2.) action pair hazing costs and threshold $\{(h^{(1)}_{a}, h^{(2)}_{a}, t^{(1)}_{a}, t^{(2)}_{a})\}_{a \in A} \in ((\frac{1}{r} \cdot \Z)^{4})^{|A|}$, and (3.) initial total hazing bound $\bd$.
    \STATE \textbf{Output:} Hazing sequence, $\sigma_* \in A^{\ell}$, and total hazing $H_*^{(1)}+ H_*^{(2)} \in \frac{1}{r} \cdot \Z$.
    \STATE Create an empty queue, $Q$
    
    \STATE  $\mem \leftarrow [] \times []$ \COMMENT{Indexed by pairs in $[0, \bd] \times [0, \bd]$}
    \STATE $(H_*^{(1)}, H_*^{(2)}) \leftarrow (\bd, \bd)$  \COMMENT{Min.\ hazing above $\theta$ so far}
    \STATE Enqueue$(0,0, \none, \none)$ \COMMENT{Hazing pairs to process}
    \WHILE{$Q$ is not empty} \label{while}
        \STATE $(H^{(1)}, H^{(2)}, p, a_p) \leftarrow$ Dequeue($Q$) \COMMENT{$H^{(i)}$ is current hazing for player $i$}
        \STATE $\mem[H^{(1)}, H^{(2)}] = (p, a_p)$  \COMMENT{Parent (hazing, action)}
        \IF{ $ H^{(i)} > \theta^{(i)}$ $\forall i \in [2]$ \textbf{and} $ H^{(1) + (2)} < H^{(1) + (2)}_*$} \label{opt_threshold}
            \STATE $(H_*^{(1)}, H_*^{(2)})  \leftarrow  (H^{(1)}, H^{(2)})$ \COMMENT{Update optimal} 
            \STATE \textbf{continue}
        \ENDIF
        \FOR{$a := (a^{(1)}, a^{(2)}) \in A$} \label{step:innerfor}
            \IF{$\exists i \in [2]$ s.t. $H^{(i)} < t_{a}^{(i)}$ \textbf{or} $(H^{(1)} + h_a^{(1)}, H^{(2)} + h_a^{(2)}) \in \mem$ } \label{step:revisited}
                    \STATE \textbf{continue} \COMMENT{Thresh.\ unsatisfied or redundant}
            \ENDIF
            \IF{$H^{(1) + (2)} + h_a^{(1) + (2)} > B$ \textbf{or} $\exists i \in [2]$  s.t. $H^{(i)} + h_a^{(i)} < 0$} \label{step:outofbounds}
                \STATE \textbf{continue} \COMMENT{Invalid or irrelevant hazing pair}
            \ENDIF
            \STATE Enqueue$((H^{(1)} + h_a^{(1)},H^{(2)} + h_a^{(2)}, (H^{(1)}, H^{(2)}), a)$
        \ENDFOR
    \ENDWHILE  
    \STATE $\sigma_* \leftarrow ()$
    \STATE $(H^{(1)}, H^{(2)}) \leftarrow (H_*^{(1)}, H_*^{(2)})$ 
    \WHILE{$\mem[H^{(1)}, H^{(2)}][2] \neq \none$}\label{step:reconstruct}  
        \STATE $\sigma_{*}\text{.prepend}(\mem[H^{(1)}, H^{(2)}][3])$ \COMMENT{Rebuild $\sigma_{*}$}
        \STATE $(H^{(1)}, H^{(2)}) \leftarrow \mem [H^{(1)}, H^{(2)}][2]$
    \ENDWHILE
    
    \STATE \textbf{return:}  $\sigma_*, H_*^{(1) + (2)}$ \label{return}
\end{algorithmic}
\end{algorithm}

\begin{restatable}{theorem}{thmdp} \label{thm:dp}
$\welfprob(\Gamma, \gamma, B)$ is solvable in $O(\poly(n)r^2B^2)$ time and $O(r^2B^2)$ space, where $|A| = n$ and $\gamma \in A^r$.
\end{restatable}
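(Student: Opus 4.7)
The plan is to establish both correctness and complexity of Algorithm~\ref{alg:dp} by interpreting it as a breadth-first search over the state graph whose nodes are reachable total-hazing pairs $(H^{(1)}, H^{(2)})$ and whose edges are labeled by action pairs $a \in A$. The key conceptual ingredient is Lemma~\ref{lem: hazing per struct}, which guarantees that we need only search over a small, structured subset of this state graph.

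First, I would bound the state space. By properties (1) and (2) of Lemma~\ref{lem: hazing per struct} together with Remark~\ref{remark:rationalhazings}, the relevant states lie in the lattice $\{0, 1/r, 2/r, \ldots, B\}^2$, which has $O((rB+1)^2) = O(r^2 B^2)$ points. Since \mem is indexed by exactly this set, the $O(r^2 B^2)$ space bound follows immediately.

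Next, for correctness, I would argue by induction on the length of the hazing prefix that for every injective minimum-hazing stable sequence $\sigma$ promised by Lemma~\ref{lem: hazing per struct}, each prefix's cumulative hazing pair $(H^{(1)}_{k-1}, H^{(2)}_{k-1})$ is eventually enqueued and then stored in \mem. The base case follows from the initial Enqueue at line~6, and the inductive step uses that the checks at lines~\ref{step:revisited} and~\ref{step:outofbounds} enforce exactly the stability conditions of Theorem~\ref{thm: stabilitywithlimit} and the bounds of Lemma~\ref{lem: hazing per struct}. The injectivity property of Lemma~\ref{lem: hazing per struct} is what justifies the pruning of already-memoized states at line~\ref{step:revisited}: any sequence revisiting a total-hazing pair can be shortened without disturbing stability, so restricting to BFS paths that visit each state at most once loses no optimum. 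Consequently, the running minimum tracked at line~\ref{opt_threshold} captures the correct optimum, and the backpointer reconstruction in the final while-loop yields a stable witness sequence since each stored transition was validated by the threshold and bounds checks at enqueue time.

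For the running time, each of the $O(r^2 B^2)$ states is dequeued at most a polynomially bounded number of times, with $O(n)$ inner-loop iterations each performing $O(1)$ arithmetic comparisons on rationals of bit-complexity $O(\poly(n, \log(rB)))$, totaling $O(\poly(n) r^2 B^2)$. The main obstacle I anticipate is the correctness argument justifying the revisit-pruning at line~\ref{step:revisited}: the invariant to verify is that every injective minimum-hazing stable sequence from Lemma~\ref{lem: hazing per struct} corresponds to a valid BFS path that the algorithm does not discard prematurely. A secondary subtlety worth addressing is the strict-vs.-non-strict inequality distinction between Theorem~\ref{thm: stabilitywithlimit} and the algorithm's comparisons, which is resolvable using the $(1/r)$-granularity of Remark~\ref{remark:rationalhazings}.
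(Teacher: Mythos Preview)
Your proposal is correct and follows essentially the same approach as the paper: both arguments interpret Algorithm~\ref{alg:dp} as a search over the finite lattice of total-hazing pairs (bounded via Remark~\ref{remark:rationalhazings} and Lemma~\ref{lem: hazing per struct}), establish correctness by inducting over the prefixes of the structured minimum-hazing sequence guaranteed by Lemma~\ref{lem: hazing per struct}, and read off the $O(\poly(n)r^2B^2)$ time and $O(r^2B^2)$ space bounds from the state-space size. Your additional remarks on the BFS framing and the strict-versus-nonstrict threshold comparison go slightly beyond what the paper spells out, but they do not change the underlying argument.
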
 
\emph{Proof Sketch.} We use the structure of the min-hazing sequence shown in Lemma \ref{lem: hazing per struct} to upper bound the size of $\mem$ that needs to be searched to find that sequence. This allows us to bound the time and space complexity of the algorithm.

Note that while Algorithm~\ref{alg:dp} minimizes $H^{(1) + (2)}$ as written, it  can be modified to minimize any function of the hazing costs $H^{(1)},  H^{(2)}$  by modifying Steps~\ref{opt_threshold} and~\ref{return} accordingly.

\section{Random Player Reassignment} \label{sec:randomreaasignment}
In this section, we consider a variant of our framework, where the roles of players may switch after a rematch. Indeed, in other settings, it has been shown that such role-switching can promote cooperation~\cite{moon2016role}. As before, we assume each player is required to play at least one round of the game with their partner, and can choose to leave after any round. When reassigned to a new relationship, they take on each of the two possible player roles with probability $1/2$. We introduce the analogous notion of stability below.

\begin{definition}[Stable Sequences with Random Player Reassignment] \label{def: stable, reassign}
We call a sequence $\sigma = (\sigma_t^{(1)}, \sigma_t^{(2)})_{t \in \N} \in A^{\N}$ stable if no player can increase their expected discounted utility by deviating at any timestep. This means for all $k \in \N$ and $i \in [2]$ we have
$\sum_{t=0}^{k-1} \beta^t p^{(i)}(\sigma_t) + \beta^k d_{\sigma_k}^{(i)} +  \sum_{t=0}^{\infty} \beta^{k+1 + t} \frac{p^{(i)}(\sigma_t) + p^{(-i)}(\sigma_t)}{2} \leq \sum_{t=0}^{\infty}\beta^t p^{(i)}(\sigma_t)$.

\end{definition}
With Definition \ref{def: stable, fixed role}, defecting multiple times was profitable for a player only if defecting once was profitable for that player as well. With Definition~\ref{def: stable, reassign}, on the other hand, it can be shown that defecting multiple times is profitable for a player only if defecting once is profitable \emph{for some player}, which follows from an argument similar to  Proposition~\ref{prop: deviate only once}. Ensuring no player can profit from deviating once is therefore still sufficient to ensure overall stability.

Just as before, instances of  $\welfprob$ and $\prob$  are induced by this variant of the repeated games framework (each action pair has a threshold value, hazing value, etc.). It is not hard to show that the same $\np$-hardness result holds (by the same construction as in Theorem~\ref{thm: np hardness}), and Algorithm~\ref{alg:dp} still solves $\welfprob$ in pseudo-polynomial time.

\section{Directions for Future Work} \label{sec: future}

There are several interesting directions for future work. One avenue is to permit mixed strategies. How would agents verify whether their opponent adhered to their assigned strategy or deviated to another strategy with the same support? Cryptographic protocols, such as those in~\citet{blum1983coin}, are a good candidate approach for strategy verification. 

Another direction is to extend our algorithmic results beyond maximum social welfare goal sequences.  The main difficulty in this general setting is that net ``unhazing'' is now possible, so total hazing is no longer monotonic. Our dynamic programming algorithm can be tweaked to work for arbitrary goal sequences with the additional constraint that no player's hazing can ever exceed some fixed threshold $\bd$. But, in theory, the optimal stable sequence could require hazing both players an enormous amount before  unhazing can be applied to reduce the total hazing down to the optimal amount. 

A third direction is to extend our work to fixed discount factors $\beta$. While some of our results hold for large enough $\beta$ (e.g., Theorem~\ref{thm: folk theorem equil} and Corollary~\ref{cor: folk theorem equil}), our algorithmic approach does not; for fixed $\beta$, the thresholds for actions depend on the time they are played.  

Our results extend to games with any number of players, although Algorithm~\ref{alg:dp}'s runtime scales exponentially.

\section*{Acknowledgements}
   We are grateful to Tuomas Sandholm and Brian Hu Zhang for their valuable feedback on an early manuscript and  George Z. Li for helpful discussions in the early stages of this project. We also thank our IJCAI'25 reviewers for their generous and detailed reviews.

   This material is based upon work supported by the National Science Foundation Graduate Research Fellowship Program under Grant No.\ DGE2140739. Any opinions, findings,
and conclusions or recommendations expressed in this material are those of the author(s)
and do not necessarily reflect the views of the National Science Foundation.

R.E.B. thanks the Cooperative AI Foundation, Macroscopic Ventures (formerly Polaris Ventures / the Center for Emerging Risk Research) and Jaan Tallinn's donor-advised fund at Founders Pledge for financial support. R.E.B. is also supported by the Cooperative AI PhD Fellowship.

\bibliography{references}

\appendix

 \section{Existence of Stable sequences} \label{appendix: stable}
 \begin{example}[Goal sequences need not exist, even in symmetric games]
Consider the classic game rock-paper-scissors, represented in Table~\ref{tab: rps}, where players only receive positive utility if they win.
\begin{table}[b]
\centering
\begin{minipage}{0.45\textwidth}
\centering

\begin{tabular}{| c | c |c | c |}
\hline
 {} & $R$ & $P$ & $S$ \\
 \hline
 $R$ & $0,0$ & $0, 1$ & $1,0$  \\ 
 \hline
 $P$ & $1, 0$ & $0, 0$ &  $0,1$ \\
    \hline
 $S$   & $0,1$ & $1,0$ & $0,0$ \\
 \hline
 \end{tabular}
 
 \caption{Rock-paper-scissors}
\label{tab: rps}
\end{minipage}
\hspace{0.05\textwidth}
\begin{minipage}{0.3\textwidth}

\centering
\begin{tabular}{| c | c |c| }
\hline
{} & $H$ & $T$ \\
 \hline 
 $H$ & $0, 1$ & $1, 0$ \\ 
 \hline
 $T$ & $1, 0$ & $0, 1$ \\
    \hline
 \end{tabular}
 
 \caption{Matching Pennies}
\label{tab: stalemate}
\end{minipage}
\end{table}

No stable sequence can offer both agents a per-round average utility of $1$. Then, for the agent who does not receive a per-round average utility of $1$, they are incentivized to deviate on the first action in which they do not receive utility $1$ (then receiving utility $1$). Hence, rock-paper-scissors has no stable sequences.    
\end{example}

As a counterpoint, we note that all $2 \times 2$ symmetric games have stable sequences. 
\begin{proposition} \label{prop: stable seq 2by2}
All $2 \times 2$ symmetric games have stable sequences.
\end{proposition}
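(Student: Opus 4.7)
The plan is to reduce the claim to showing every $2 \times 2$ symmetric game admits a pure Nash equilibrium of the one-shot game, and then to observe that the constant sequence repeating such a profile forever is stable in the sense of Definition~\ref{def: stable, fixed role}.

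For the reduction, suppose $a^* = (a_*^{(1)}, a_*^{(2)}) \in A$ is a pure Nash equilibrium and take $\sigma \in A^{\N}$ with $\sigma_t = a^*$ for all $t$. For any $k \in \N$ and $a^{(1)} \in A^{(1)}$, the right-hand side of the inequality in Definition~\ref{def: stable, fixed role} is $\sum_{t=0}^{\infty} \beta^t p^{(1)}(a^*)$, while the left-hand side differs only in its $k$-th summand, where $\beta^k p^{(1)}(a^*)$ is replaced by $\beta^k p^{(1)}(a^{(1)}, a_*^{(2)})$; crucially, the restart tail $\sum_{t=0}^{\infty} \beta^{k+1+t} p^{(1)}(a^*)$ equals the continuation tail $\sum_{t=k+1}^{\infty} \beta^t p^{(1)}(a^*)$ because $\sigma$ is constant. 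The Nash condition $p^{(1)}(a^{(1)}, a_*^{(2)}) \leq p^{(1)}(a^*)$ then yields stability for player~1, and symmetry handles player~2.

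To show every $2 \times 2$ symmetric game has a pure Nash equilibrium, let $A^{(1)} = A^{(2)} = \{a_1, a_2\}$, set $P_{ij} := p^{(1)}(a_i, a_j)$, and note that $p^{(2)}(a_i, a_j) = P_{ji}$ by symmetry. Then $(a_1, a_1)$ is a Nash equilibrium iff $P_{11} \geq P_{21}$; $(a_2, a_2)$ iff $P_{22} \geq P_{12}$; and $(a_1, a_2)$ iff both $P_{12} \geq P_{22}$ and $P_{21} \geq P_{11}$. If the first two conditions fail, then $P_{21} > P_{11}$ and $P_{12} > P_{22}$, so the two conditions for $(a_1, a_2)$ both hold. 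Hence at least one pure Nash equilibrium always exists.

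There is essentially no obstacle, but one subtlety is worth flagging: one cannot directly invoke Theorem~\ref{thm: folk theorem equil} here, since its first part requires the strict inequalities $d_a^{(1)} < v_\gamma^{(1)}$ and $d_a^{(2)} < v_\gamma^{(2)}$, which can fail when $\gamma$ merely repeats a one-shot Nash equilibrium (we only get $d_a^{(i)} = v_\gamma^{(i)}$). The direct verification in the second paragraph circumvents this: because restarting the constant sequence yields the same payoff stream as continuing it, weak one-shot Nash optimality suffices.
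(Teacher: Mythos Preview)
Your proof is correct and follows essentially the same approach as the paper: reduce to the existence of a pure Nash equilibrium in every $2\times 2$ symmetric game (which the paper leaves as ``simple casework'' and you carry out explicitly), and then observe that repeating any pure Nash equilibrium forever is stable. Your added remark that Theorem~\ref{thm: folk theorem equil} does not directly apply is a nice touch, though the paper simply asserts the second fact without appealing to that theorem.
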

\begin{proof}
This is an immediate corollary of the following two facts:
\begin{enumerate}
    \item All $2 \times 2$ symmetric games have a pure strategy Nash equilibrium (this is follows from simple casework).
    \item If $(C_1, C_2)$ is a Nash equilibrium of a game $\Gamma$, the sequence repeating $(C_1, C_2)$ forever is a stable sequence in $\Gamma$.
\end{enumerate}
\end{proof}

However, Proposition~\ref{prop: stable seq 2by2} does not hold in $2 \times 2$ asymmetric games.
\begin{example}[$2 \times 2$ asymmetric game without stable sequences]
Consider the classic game Matching Pennies (see Table~\ref{tab: stalemate}). This game has no pure-strategy Nash equilibria. It is easy to check that it also does not have any stable sequences by a similar argument to rock-paper-scissors in Table~\ref{tab: rps}. 

\end{example}

\section{Omitted proofs} \label{appendix:omit}
In this section, we restate key claims whose formal proofs were omitted in the main body of the text and provide their proofs.
\devonlyonce*
\begin{proof}
The reverse direction is trivial. We prove the forward direction by induction. Suppose it does not benefit a player to defect at most $r$ times, for $r \geq 1$.  We want to show that it cannot benefit a player to defect exactly $r +1$ times. Indeed, after defecting once, the remaining game is a scaled version of the original game in which the player defects $r$ times. So, by the inductive hypothesis, their total utility after defecting is at most as much as if they did not defect again after the first defection. But then their total utility is at most as much as if they only defected once. But, since it does not benefit a player to defect once (our base case), their utility from defecting $r + 1$ times is then at most their utility if they did not defect at all. The claim then follows by induction.
\end{proof}
\stableexistence*
\begin{proof}
\emph{(Claim 1.)} The  proof is inspired by the classic Folk theorem from repeated games.  
Let
\[
\delta := \min \left(v_\gamma^{(1)} - d_a^{(1)}, v_\gamma^{(2)} - d_a^{(2)}\right),
\]
and let $T$ be a positive integer large enough so that $T \cdot \delta > r\kappa$, where $\kappa$ is largest payoff value achievable in $\Gamma$. 
We appeal to Theorem~\ref{thm: stabilitywithlimit}. Rephrasing the guarantee from Theorem~\ref{thm: stabilitywithlimit}, we get that $\sigma$ is stable for sufficiently large $\beta \in (0,1)$ if and only if, for all $k \in \N$ and $i \in [2]$, we have
\begin{align}
  kv_{\gamma}^{(i)} - \sum_{t = 0}^{k-1} p^{(i)}(\sigma_t) &> d^{(i)}_{\sigma_k} - v_{\gamma}^{(i)} \label{eq: restart cost vs. profit} \\ \iff
(k+1)v_{\gamma}^{(i)}  &> d^{(i)}_{\sigma_k} + \sum_{t = 0}^{k-1} p^{(i)}(\sigma_t) \label{eq: key ineq}
\end{align}
Equation (\ref{eq: restart cost vs. profit}) intuitively says that the long-term cost of restarting a relationship is larger than the immediate profit from defecting. Now, if $k$ is during the hazing period (\emph{i.e.}, $k \leq T - 1$), then $\sigma_t = a$ for $t \leq k$, and (\ref{eq: key ineq}) holds because $p^{(i)}(a) \leq d_{a}^{(i)} < v_\gamma^{(i)}$. Otherwise, if $k > T-1$, we have 

\begin{align*}
\sum_{t = 0}^{k-1} p^{(i)}(\sigma_t) &\leq T d_{a}^{(i)} + \sum_{t = T}^{k-1}p^{(i)}(\sigma_t) \\
&\leq T d_{a}^{(i)} + \left\lfloor \frac{k - T }{r} \right\rfloor \cdot r v_\gamma^{(i)} + \left( (k - T) - \left\lfloor \frac{k - T}{r} \right\rfloor \cdot r \right) \kappa \\
&\leq T d_{a}^{(i)} + (k - T - r + 1)v_\gamma^{(i)} + (r -1) \kappa.
\end{align*}
In particular, then we have
\begin{align*}
(k+1)v_{\gamma}^{(i)} - d^{(i)}_{\sigma_k} - \sum_{t = 0}^{k-1} p^{(i)}(\sigma_t) &\geq (T + r)v_\gamma^{(i)} - r \kappa - Td_a^{(i)} \\
&> T\delta -r \kappa  > 0,
\end{align*}
yielding the desired result.

\emph{(Claim 2.)} Suppose that $\sigma$ were a stable sequence with $\gamma$ as its goal sequence. Then, since $\sigma_0 \in A$, for some $i \in [2]$ we have $d^{(i)}_{\sigma_0} > v_\gamma^{(i)}$. Suppose without loss of generality this holds for $i = 1$. Then, player 1 can repeatedly deviate on the first round of the sequence and gain strictly higher utility than what she would have gotten upon reaching the goal sequence, contradicting stability. Since the player gets higher utility than the goal sequence \emph{in every timestep}, the result holds for any value of $\beta$.
\end{proof}

\stableexistencecor*
\begin{proof}
The proof of that there exists $\sigma$ stable with goal sequence $\gamma$ for large enough $\beta$ follows immediately from Theorem~\ref{thm: stabilitywithlimit} and the observation that, for each $i \in [2]$, $v_{\gamma}^{(i)} = (p^{(1)}(a_*) + p^{(2)}(a_*))/2$. Then, the welfare maximization (and hence Pareto optimality) follows from the fact that $a_*$ is maximum social welfare. Finally, $\sigma$ is limit-utility fair because $v_\gamma^{(1)} = v_\gamma^{(2)}$.
\end{proof}

\stabilitywithlimit*

\begin{proof}
 We know that $\sigma$ is stable for $\beta > 0$, if and only if, for player $1$, for all $k \in \N$, we have  
 \begin{align} 
    \sum_{t = 0}^{k-1} \beta^{t}p^{(1)}(\sigma_t) + \beta^k \cdot \max_{a^{(1)} \in A^{(1)}} p^{(1)}(a^{(1)}, \sigma_k^{(2)})  \leq  (1 - \beta^{k+1})\sum_{t = 0}^{\infty} \beta^t p^{(1)}(\sigma_t), \label{eq: stability with repeats}
 \end{align}
 and similarly for player $2$.  Suppose without loss of generality that the hazing period of $\sigma$ is length $T \geq r$ (treat the first iteration of $\gamma$ as part of the hazing period if necessary). Then, the right hand side 
 of (\ref{eq: stability with repeats}) can be expressed as:
 \begin{equation*}
     (1 - \beta^{k+1}) \cdot \left(\sum_{t = 0}^{T-1} \beta^t p^{(1)}(\sigma_t) + \frac{\beta^{T}}{1 - \beta^r}\left( \sum_{t = 0}^{r-1} \beta^t p^{(1)}(\gamma_t)\right)\right).
 \end{equation*}

 So, we equivalently require:
 \begin{align}\label{eqtn: rearranged stab with repeats}
  &\sum_{t = 0}^{k-1} \beta^{t}p^{(1)}(\sigma_t) + \beta^k \max_{a^{(1)}} p^{(1)}(a^{(1)}, \sigma_k^{(2)}) \leq ( 1- \beta^{k+1}) \left( \sum_{t = 0}^{T-1} \beta^t p^{(1)}(\sigma_t) +  \sum_{t = 0}^{r-1} \frac{\beta^{T + t} p^{(1)}(\gamma_t)}{1 - \beta^r} \right). \nonumber 
 \end{align}

 Taking the $\beta \to 1$ limit on both sides (using L'H\^{o}pital's rule on the right hand side), we get 
 \begin{equation}
    \sum_{t = 0}^{k - 1} p^{(1)}(\sigma_t) + \max_{a^{(1)} \in A^{(1)}} p^{(1)}(a^{(1)}, \sigma_k^{(2)}) < (k+1) \cdot v_{\gamma}^{(1)}. \label{eqtn: rearranged stab with repeats}
 \end{equation}
The strict inequality appears here since the derivative with respect to $\beta$ of the expression obtained by subtracting the left-hand side from the right-hand side in (\ref{eqtn: rearranged stab with repeats}) is strictly positive for large enough $\beta < 1$. So, if we had an equality in the $\beta \to 1$ limit, for large enough $\beta$, $\sigma$ would not be stable. Rearranging the equation then yields the desired result for player $1$ and the proof is analogous for player $2$.
\end{proof}

\nphardness*
\begin{proof}
We prove the result by reducing from the Unbounded Subset-Sum Problem (USSP) with non-negative integer inputs to $\prob$. The (weak) \np-hardness of this problem is well-known (\emph{e.g.}, see~\cite{lueker1975two}). The problem is the following: given  integers $\{b_1, b_2, \ldots, b_n\}$ and $B$, do there exist $c_1, \ldots, c_n \in \mathbb{Z}_{\geq 0}$ such that $\sum_{i = 1}^n c_i b_i = B$? Given an instance of USSP, there is a polynomial time algorithm computing a corresponding instance of $\prob$. Namely, we construct a symmetric game $\Gamma$ with $A^{(1)} = A^{(2)}$ and $|A^{(1)}| = n +2$ where:
\begin{itemize}
    \item $p^{(1)}(a_0, a_0) = B$,
    \item for all $i \in [n]$, $p^{(1)}(a_i, a_i) = B - b_i$,
    \item $p^{(1)}(a_{n+1}, a_0) = 2B - 1$, $p^{(1)}(a_0, a_{n+1}) = - \infty$,
    \item and, for all other action pairs $i \neq j$ and $i = j = n+1$, $p^{(1)}(a_i, a_j) = -\infty$.
\end{itemize}
Define $p^{(2)}(a_i, a_j) = p^{(1)}(a_j, a_i)$ for all $i,j \in \{0\} \cup [n+1]$ so that $\Gamma$ is symmetric.

Now, consider the goal sequence repeating $(a_0, a_0)$ forever. 
Our choice of goal sequence combined with the definition of $\Gamma$ induces hazing costs and threshold values for each action pair $\{(h_{a}, t_{a})\}_{a \in A}$ and goal sequence thresholds $(\theta^{(1)}, \theta^{(2)}) = (B-1, B-1)$. We set parameter $\Delta = B$. The thresholds for all on-diagonal action pairs are trivial by construction, and, using an off-diagonal action pair during hazing would yield infinite hazing. (By Theorem~\ref{thm: folk theorem equil}, there exists a hazing sequence with total hazing at most $B + b_1$ hazing by just repeating $(a_1, a_1)$, so it would even suffice to set the $-\infty$ values to instead be $-B - b_1$.) Then, there existing a hazing sequence hazing each player exactly $B$ is equivalent to finding a sequence of on-diagonal action pairs such that the sum of hazing for each player at the end is $(B,B)$. But, this is then equivalent to solving the instance of USSP, proving that $\prob$ is $\np$-hard.
\end{proof}

\hazingstruct*
\begin{proof}
\emph{(Property 1.)} The total hazing for both players is always non-negative by the stability condition. If that were not the case, a player with negative total hazing could increase their utility by repeatedly restarting the game after the round in which their total hazing became negative. Additionally, note since each action pair in $\gamma$ induces maximum social welfare, we have that, for all $a \in A$, 
\begin{equation*}
p^{(1)}(a) + p^{(2)}(a) \leq  v_{\gamma}^{(1)} + v_{\gamma}^{(2)}.
\end{equation*}
Hence,
\begin{equation} \label{eqtn: nonneg hazing}
 h_a^{(1)} + h_a^{(2)} = v_{\gamma}^{(1)} + v_{\gamma}^{(2)} - p^{(1)}(a) - p^{(2)}(a) \geq 0.   
\end{equation}
As such, the total hazing, $H_k^{(1)} + H_k^{(2)}$, must be increasing with $k$. Therefore, since the minimum hazing stable sequence has total hazing at most $B$, it must always satisfy $H_k^{(1)} + H_k^{(2)} \leq B$.

\emph{(Property 2.)} For all time steps $k_1 < k_2$, 
\begin{equation*}
 H_{k_2}^{(1)} + H_{k_2}^{(2)}  - H_{k_1}^{(1)} -  H_{k_1}^{(2)} = \sum_{t = k_1 + 1}^{k_2} H^{(1)}_{\sigma_t} + H^{(2)}_{\sigma_t} \geq 0, 
\end{equation*}
by~(\ref{eqtn: nonneg hazing}). 

\emph{(Property 3.)} Consider a stable sequence with goal sequence $\gamma$ and minimum total hazing. If there existed time steps $k_1 < k_2$ in the hazing period where  $(H_{k_1}^{(1)}, H_{k_1}^{(2)}) = (H_{k_2}^{(1)}, H_{k_2}^{(2)})$, the modified sequence with actions in time steps $k_1 < t \leq k_2$ removed would also be stable, with goal sequence $\gamma$, and minimum total hazing. Indeed, the stability of actions occurring after time $k_2$ will still hold: the total hazing when reaching those actions is identical among the two sequences.
\end{proof}

\npcompleteness*
\begin{proof}
Suppose there exists a stable sequence with goal sequence $\gamma$ and total hazing at most $B$. By~Lemma \ref{lem: hazing per struct}, there exists a minimum hazing stable sequence with hazing period of length at most $r^2B^2$. Checking stability of a sequence can be done in $\poly(n)$ time, so, this implies the first part of the result. This is also implied by Theorem~\ref{thm:dp}.

For the second part, let $\sigma$ be an optimal stable sequence for $\Gamma$ with only $\poly(n)$ occurrences of action pairs with negative hazing and, without loss of generality (using Lemma~\ref{lem: hazing per struct}), assume that the pairs of total hazing values are distinct for all time steps prior to the end of the hazing sequence. Let $(a^{(1)}, a^{(2)})$ be an action pair occurring in the hazing period with nonnegative hazing for both players. Then, at its first occurrence, the total hazing surpasses the threshold hazing for $(a^{(1)}, a^{(2)})$, by stability of $\sigma$. Since  $(a^{(1)}, a^{(2)})$ hazes both players a non-negative amount, the total hazing after having played $(a^{(1)}, a^{(2)})$ still meets the threshold hazing for $(a^{(1)}, a^{(2)})$. Then, consider the alternative sequence formed by:
\begin{enumerate}
    \item After $(a^{(1)}, a^{(2)})$ is played, repeat it $k - 1$ additional times, where $k$ is the number of times  $(a^{(1)}, a^{(2)})$ originally appeared in the hazing sequence.
    \item  Playing the remainder of the hazing sequence, skipping additional occurrences of  $(a^{(1)}, a^{(2)})$.
\end{enumerate} 
This sequence has same total  hazing as the original sequence by the end of the hazing period. Moreover, the new sequence is also stable: the total hazing at each time step before the first occurrence of $(a^{(1)}, a^{(2)})$ is unchanged, the sequence is stable at each occurrence of $(a^{(1)}, a^{(2)})$, and the total hazing is greater or equal than before at each action pair after all $k$ repetitions of $(a^{(1)}, a^{(2)})$.

\begin{figure}[t]
    \centering
\includegraphics[width=0.6\textwidth]{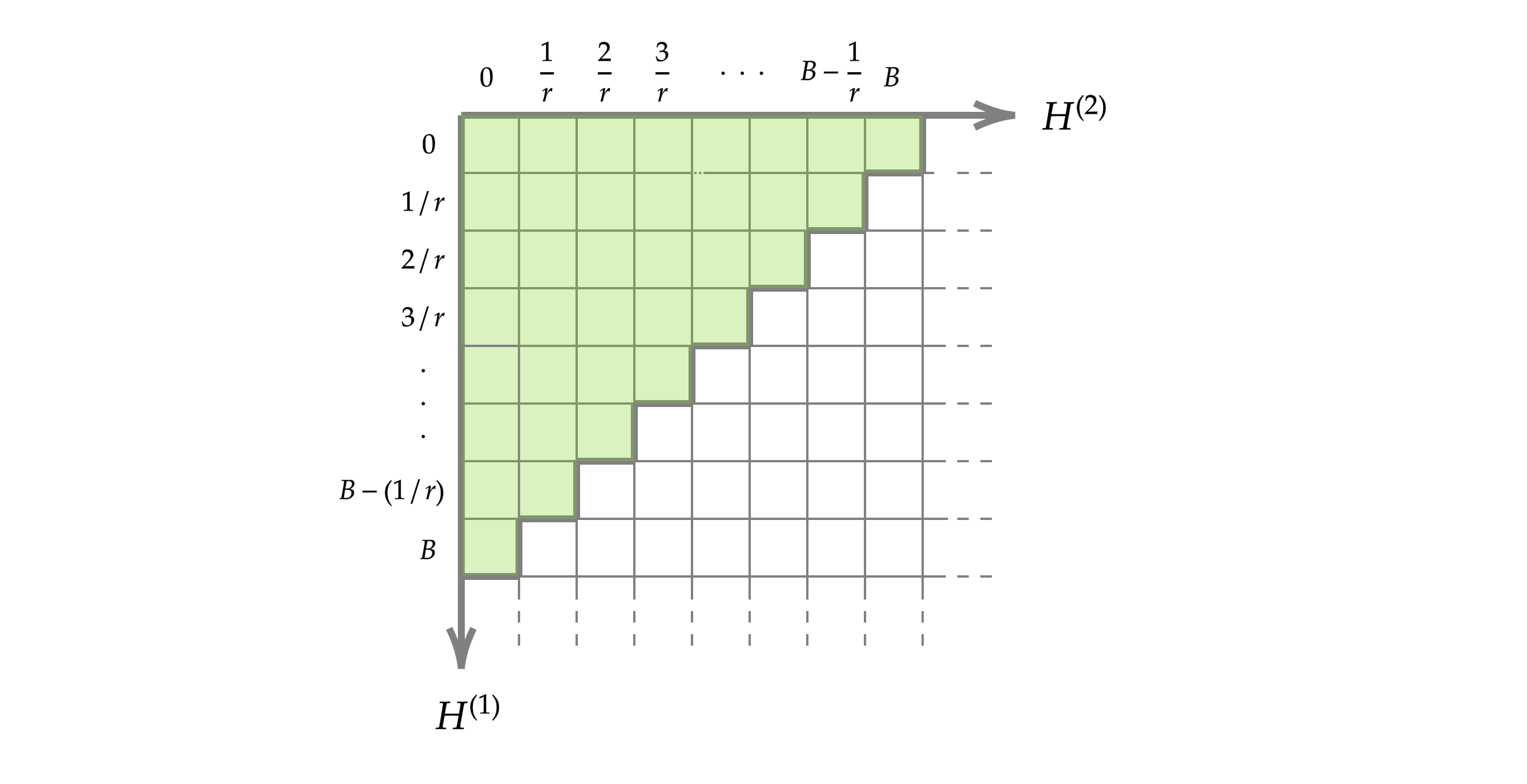}
    \caption{Pictorial representation of $\mem$. The green shading denotes the bounds of hazing value pairs that Algorithm \ref{alg:dp} will search according to Lemma \ref{lem: hazing per struct}.}
    \label{fig:memo}
\end{figure}

Repeating this modification process for each $(a^{(1)}, a^{(2)})$ that appears in the hazing sequence and has non-negative hazing for both players, we can represent the hazing sequence of the resultant optimal stable sequence in a succinct form. That is, the sequence of action pairs paired with the number of times it is repeated consecutively in the sequence will have total size $\poly(n)$. There are $\poly(n)$ total action pairs with non-negative hazing for both players, and each appears only once in this sequence. Then, the remaining action pairs with negative hazing for either player appear a total of $\poly(n)$ times, by assumption. This yields the desired result.
\end{proof}

\thmdp*
\begin{proof}
The proof consists of showing (a) the  correctness, (b) the time complexity, and (c) the space complexity of Algorithm \ref{alg:dp}.
\newline \emph{(a) Correctness:} First, we prove that the algorithm terminates and then that the hazing sequence returned corresponds to a stable sequence with goal sequence $\gamma$ and achieves the minimum sum of total hazing values surpassing the thresholds, $\theta^{(1)}, \theta^{(2)}$. To prove that the algorithm terminates, notice that $\mem$ is a matrix, whose elements belong are all indexed by pairs of total hazing values in $\mathbf{R}^2$. We will prove that pairs of total hazing values between the two players within a candidate hazing sequence can only take on a bounded number of values before the algorithm stops considering that sequence. Specifically, at any given point in the algorithm, $H^{(1)} + H^{(2)} \leq B$ and $\min(H^{(1)}, H^{(2)}) \geq 0$. Notice that since the action pairs in the goal sequence induce maximum social welfare, the sum of hazings per round must be non-negative (see Remark \ref{remark:sumofhazingsperround}). This  implies that once the total sum of hazings becomes greater than $B$, the hazing sequence considered so far cannot reach any total sum hazings that is strictly smaller than $B$, and is therefore suboptimal (see Lemma \ref{lem: hazing per struct}). In Step \ref{step:outofbounds}, Algorithm~\ref{alg:dp} will stop considering that sequence, since $H_{*}^{(1)} +  H_{*}^{(2)} \leq \bd$ by definition. Also, Algorithm~\ref{alg:dp} will stop considering a candidate sequence once that sequence reaches a pair of hazing values that it has reached before (see the second condition of Step~\ref{step:revisited}). Since the total hazing that each player can accumulate is a rational number that is also bounded, indeed the pair of total hazings is in $([0, B] \cap (\frac{1}{r} \cdot \mathbb{Z}))^2$ (see Remark~\ref{remark:rationalhazings}), the while loop (Step~\ref{while}) will loop at most $(r+1)^2B^2$ times.
\par We now show that Algorithm~\ref{alg:dp} will achieve the minimum sum of total hazing values corresponding to a stable sequence with goal sequence $\gamma$. By Lemma~\ref{lem: hazing per struct}, there exists a minimum hazing stable sequence $\sigma$ with properties 1–3. Now, for $k$ in the hazing period of $\sigma$, the total hazing $H_k$ will be added to $\mem$. Indeed, $H_1$ will be added to $Q$ after processing $(0,0)$ since the threshold for $\sigma_1$ is met by $(0,0)$ by stability of $\sigma$ (the first bound of Step~\ref{step:revisited} is not satisfied) and the conditions of Step~\ref{step:outofbounds} are not met by stability and minimality of $\sigma$. Then, by induction, for each $k$ in the hazing period of $\sigma$, $H_k$ will be added to $\mem$. So, the pair of total hazings of $\sigma$ will be added to $\mem$. So, by Step~\ref{opt_threshold}, the total hazing of of the stable sequence returned by Algorithm~\ref{alg:dp}, $\sigma_*$, will be at most the sum of the total hazing of a stable sequence with goal sequence $\gamma$ and minimum sum of total hazings. Stability of the outputted sequence is ensured by Steps~\ref{opt_threshold} and~\ref{step:revisited}.

\emph{(b) Time complexity:} Again, note that the while loop will loop at most $O(r^2B^2)$ times. The steps before the inner for loop take constant time. The inner for loop, Step~\ref{step:innerfor}, iterates $|A|^2$ times per while loop iteration, with each taking constant time. Hence, in total, the while loop takes $O(|A|^2r^2B^2)$ time. Reconstructing the optimal sequence (the loop at Step~\ref{step:reconstruct}) takes $O(r^2B^2)$ time,  since each memo entry is considered only once during reconstruction (the directed graph induced by parent-child relationships among memo entries is acyclic since memo entries are processed only once). Overall the algorithm runs in $O(|A|^2r^2B^2)$ time. 
\newline \emph{(c) Space complexity:} Notice that $\mem$ and $Q$ have at most $O(r^2B^2)$ entries of rational numbers. The optimal sequence is a subset of entries in $\mem$ so reconstructing it requires at most $O(r^2 B^2)$ space.
\end{proof}

\end{document}